\documentclass{llncs}
\usepackage{graphics}
\usepackage{graphicx}
\usepackage{amsmath}
\usepackage{amssymb}
\usepackage{algorithm}
\usepackage{algorithmic}
\usepackage{mathdots}

\usepackage{tikz}
\usetikzlibrary{positioning}
\usetikzlibrary{calc,trees,positioning,arrows,chains,shapes.geometric,%
    decorations.pathreplacing,decorations.pathmorphing,shapes,%
    matrix,shapes.symbols,automata}

\usetikzlibrary{fit}
\usetikzlibrary{shapes}
\usetikzlibrary{decorations.markings}

\tikzstyle{randomVariable}=[circle,fill=white,draw=black,text=black,minimum size=0.8cm]
\tikzstyle{param}=[draw=none,fill=none,text=black,minimum size=0.2cm]

\newtheorem{defin}     {Definition}
\newtheorem{theo}      {Theorem}

\newtheorem{cor}  {Corollary}
\newtheorem{prop}      {Property}

\newtheorem{rem}       {Remark}
\newtheorem{exa}  {Example}

\begin{document}

\mainmatter  

\title{Strong aggregation of the Markov chains associated with   matching models 
based on the automorphism group of their compatibility graphs}

\author{Moyi Yang$^1$, Jean-Michel Fourneau$^{1,2}$}

\institute{
$1$  DAVID, Univ. Paris-Saclay, UVSQ, Versailles France\\
\email{moyi.yang@ens.uvsq.fr \\}
$2$, INRIA, Paris, France \\
\email{Jean-Michel.Fourneau@uvsq.fr\\}
 }
\maketitle
\thispagestyle{plain}
\begin{abstract}
We extend the analysis of strong aggregation to general compatibility graphs, 
focusing on item counts rather than positions, and exploring generalized greedy matching disciplines. 
We prove that under a condition of automorphism-based transition consistency, the associated Markov chain is strongly aggregable for an arbitrary graph with a non-trivial automorphism group. 
Furthermore, we extend our analysis to non-greedy matching disciplines, distinguishing scenarios where compatible items can or cannot coexist within the same state. This result is illustrated with a simple compatibility graph with 
a rich automorphism structure: the odd rings. 
For all scenarios, we investigate the strong aggregation properties of the resulting Markov chains. 
This work enhances the theoretical understanding of lumpability in stochastic matching models and provides a foundation for analyzing complex graph structures.
\end{abstract}

\section{Introduction}

We consider a matching system with random arrivals of items of various 
types in discrete time, as proposed in  \cite{MBMa21} and  \cite{MaMo18}. 
The number of types is finite. The items wait in 
their queue (one for each type) until they are matched. Once two 
items (one of each type)  are matched, both items leave the queues immediately. 
The possible matching between two types of item is described by:
\begin{itemize}
\item  
a compatibility graph  $G=(V,E)$, where $V$ is the set of types (or nodes)
and an edge $(i,j)$ is in $E$ if item of type $i$ can be matched with items of type $j$.
\item a matching discipline $\Phi$, which is needed when several possible matchings occur due to an arrival. 
When an arriving item is compatible with several types of items already waiting, it provides 
a couple of items that are selected to leave the system. Usual matching disciplines are First Come First Match (FCFM)
and Match the Longest (ML). 
\end{itemize}

Arrivals are described by a probability distribution $(\alpha_i)_{i\in V}$.
At each time slot, one can observe the arrival of an item $i$ with probability $\alpha_i$, or no arrival at all with 
probability $\alpha_\emptyset$, so that $\alpha_\emptyset+\sum_{i\in V}\alpha_i=1$.
Assuming that the discipline only takes into account the number of items of each type and possibly their order
of arrival, such a model is associated with a Discrete Time Markov Chain (DTMC in the following). 
Note that it is also possible to study such a system in continuous time, 
assuming that the arrivals of items follow independent Poisson processes.
Here we only consider discrete time, but our results can be 
applied to the continuous time models 
as they are obviously uniformisable.

We consider some matching disciplines which, to the best of our knowledge, have not been studied so far, a notable exception being the ML discipline.
The set of disciplines we consider is based on the number of items of each type and does not take into account
the order of arrivals. Therefore, it is not needed to represent states 
as "words" and items as  "letters", as in \cite{MaMo18}. Instead, to represent the state of the system, we simply use the number of items (or letters when we want to use 
some results based on the "words" representation). The simplest discipline of that kind is the RANDOM 
discipline, where the compatible item is selected uniformly at random. 
Even if this discipline is very simple and has strong similarity with the Processor Sharing discipline in queuing theory, 
the associated Markov chain is in general not reversible, as we prove in Section~2. 

Here we will prove that the Markov chain associated to the RANDOM matching discipline is 
strongly aggregable for partitions described with the automorphism group of the compatibility graph.  
We generalize the results on RANDOM in various directions: we first consider greedy discipline, where, intuitively, 
the matching always occurs when two compatible items are present in the system. Then we study 
two families of non-greedy disciplines: in the first case, the arriving item is rejected even if it matches
another item already present, while in the second case, we assume that the selection for the 
matching and the deletion is conditioned by a threshold. 
In every case, we found necessary conditions for strong aggregation. 
These two cases have to be studied separately because this modification of the matching mechanism has
a strong implication on the state-space. For greedy disciplines or disciplines with REJECTION, the 
states are related to Independent States (IS in the following) 
of the compatibility graph while such a relation does not hold for disciplines with THRESHOLD.  

\begin{defin} [Independent Set]
Let $G=(V,E)$ be a compatibility graph.
A subset $I\subseteq V$ is an independent set of $G$ if for all $x,y\in I$ with $x\neq y$, we have $(x,y)\notin E$.
The set of independent sets of $G$ is denoted by $\mathcal{IS}(G)$.
\end{defin} 

\subsubsection{Related work.}
Several results about strong aggregation of discrete-time Markov chains associated with high-level models are already known. Indeed, one can expect that the partition of the state-space needed to prove strong aggregation is easier to find in a high-level model. 
In \cite{GDFo03} and  \cite{BBFP04}, such results were obtained for Stochastic Automata Networks, where the Markov chains are obtained through Kronecker products of the matrices associated with the stochastic automata. 
Local properties of the automata \cite{GDFo03} or duplication of automata in the definition of the net \cite{BBFP04} provide sufficient structural conditions for aggregation.  
In \cite{DuHa89,BDHI05},  local symmetries in Stochastic Petri Nets also lead to sufficient conditions for strong aggregation of the 
state-space of the associated Markov chain. 
In \cite{FoYa24}, we considered an extension of the matching model where the compatibility between items is described by a multigraph, i.e., a graph where all the nodes have a self-loop. 
In these models, an item is compatible with another item of the same type. 
Thus, two items of the same type cannot exist within the system, and the two items of the same type may be matched and leave the system together. 
We considered the RANDOM matching discipline. 
Such a model is used
to represent selection among online gamers: the type of an item is the level (for instance ELO points) of the opponent a gamer wants to play against.  
If all the nodes have a self-loop, the Markov chain associated with the model is finite and ergodic under some simple assumptions on the arrival probabilities and on the compatibility graph. 
The chain is finite, and the states are exactly the independent sets of the graph. 
As the number of independent sets of $G$ may grow exponentially with the number of nodes in $G$, using strong aggregation to reduce the number of states and perform some numerical analysis is an efficient way to deal with the state-space explosion. 
Note that the multigraph compatibility model with FCFM discipline was also independently studied in \cite{BMMR20}, under a more general assumption that some nodes, but not necessarily all of them, carry a self-loop. 
Due to the FCFM discipline, a product form solution was proved for the steady-state distribution. 

Another approach based on strong aggregation was used in \cite{BCDF24} to compute the steady-state distribution of the model whose compatibility graph is the complete graph with four nodes minus one edge. 
The following argument was used to prove that the chain can be aggregated into the chain associated with the complete graph with three nodes. We begin with a classical notation. 

\begin{defin}[Neighbourhood]
Let $G=(V,E)$ be a graph, and let $x$ be a node of $V$. 
The neighborhood $\Gamma(x)$ of $x$ is the subset of $V$ defined by $\Gamma(x)=\left\{y\in V:(x,y)\in E\right\}$.  
\end{defin}

\begin{prop} 
Let $\mathcal{S}$ be the state-space and let $G$ be a compatibility graph with two nodes $x$ and $y$, such that $\Gamma(x) = \Gamma(y)$ 
and $y \notin \Gamma(x)$. Thus, $x$ and $y$ can be considered as twins. 
Let $m[x]$, (resp. $m[y]$) denote the number of items at state $m$. 
Let $\cal{M}$ be the discrete time 
Markov chain associated with $G$, FCFM discipline, and any arrival rate vector which implies the ergodicity of the chain. Then $\cal{M}$ is strongly aggregable for the partition ${\cal B}_0$,..,${\cal B}_k$ with
\begin{equation} \label{twins} 
{\cal B}_k = \{ m \in {\cal S}, ~m[x]+m[y]= k \}. 
\end{equation} 
Clearly, the states in ${\cal B}_m$ are obtained by merging $x$ and $y$ in the description of $m$. 
Furthermore, the aggregated chain is associated with the model with compatibility graph $G'$ where
nodes $x$ and $y$ have been merged. 
\end{prop} 
 
The utilization of the automorphism group to reduce the complexity of a Markovian model has been proposed before. 
In \cite{SMKi11}, lumping techniques are used to reduce the number of  ODEs to be solved in an SIS model. 
In \cite{KADK19}, the steady-state distribution is approximated through a lumping based on local symmetries. 
To the best of our knowledge, the application of graph automorphisms to the exact analysis of matching models has not been studied, except in the two references given above. 

Matching Models were originally motivated by the analysis of kidney exchanges \cite{KidneySite,Unve10}. 
A kidney exchange arises when a healthy person who wishes to donate a kidney is not compatible (blood types or tissue types) with the receiver. Two incompatible pairs, or possibly more, can then form a cyclic exchange so that each patient can receive a kidney from a compatible donor. 

\subsubsection{Organization of the paper.}
In Section~2, we give some stability conditions for greedy 
disciplines \cite{MaMo18} which are formally defined. 
We also show that the model associated with RANDOM discipline is not reversible. 
In Section~3, we present our first result on a simple topology: the odd rings 
(note that even rings are not stable, as they are bipartite graphs) with the RANDOM discipline. Then we extend the results to an arbitrary greedy discipline. 
Section~4 is devoted to the generalization to an arbitrary graph with a non-trivial automorphism group and to an arbitrary greedy discipline. We illustrate the results with several
examples. 
In Section~5, we consider more complex disciplines which are not greedy. 
First, we study disciplines based on rejection, then the discipline based on a threshold.
Section~6 concludes the paper.

\section{Notation and Assumptions}

\subsection{Compatibility graph and states}
Let $G=(V,E)$ be the compatibility graph.
We use $n$ to denote the cardinality of $V$, implying the $n$ distinct types of items.
The set $E$ comprises edges $(x,y)\in V\times V$, where $(x,y)$ indicates compatibility between the item $x$ and $y$, and they are instantaneously removed from the system if selected by the matching discipline (in Section 5.3, this assumption will be modified). 
Without loss of generality, we assume that $G$ is connected and all the nodes in $G$ have no self-loops.

As the disciplines we consider only depend on the number of items of each type, we represent the state of the system by a vector of these numbers.
More precisely, a state is a vector $m=\left(m[1],\cdots m[n]\right)\in\mathbb{N}^n$, where $m[u]$ is the number of items of type $u$ waiting in the system.
We use the following notations:
\begin{itemize}
    \item $e_u$ is the vector of size $n$ whose entries are all equal to $0$, except the entry $u$ which is equal to $1$. The operator "$+$" is the component-wise addition of vectors;
    \item $|m|=\sum_{u\in V}m[u]$ is the total number of items waiting in state $m$;
    \item $\operatorname{supp}(m)=\{u\in V:m[u]>0\}$ is the support of $m$, which is the set of types which are present in state $m$;
    \item $\Gamma(m)=\Gamma\left(\operatorname{supp}(m)\right)$ is the set of the types which are compatible with at least one item waiting in state $m$. An arriving item of type $x$ is said to be compatible with state $m$ when $x\in\Gamma(m)$;
    \item $\Delta(m,x)=\Gamma(x)\cap\operatorname{supp}(m)$ is the set of the types which are present in $m$ and compatible with $x$.
\end{itemize}

\subsection{Matching disciplines}
\begin{defin}[Matching Discipline]
    Let $m$ be an arbitrary state, and $x$ be any item arriving into the system such that $x\in\Gamma(m)$. The matching discipline $D$ can be defined as a function mapping from $\mathcal{S}\times V$ to a non-empty subset $R(m,x)$, together with a probability distribution $\mu_{R(m,x)}$ on $R(m,x)$. 
    The state set $R(m,x)$ contains all possible successor states which are provoked by the arrival of item $x$ in state $m$. 
    And $\mu_{R(m,x)}(r)$ denotes the transition probability 
    from state $m$ to any state $r$ in $R(m,x)$ upon the arrival of item $x$.
    Furthermore, the distribution satisfies the usual condition:
    $$\sum_{r\in R(m,x)}\mu_{R(m,x)}(r)=1.$$
    \label{def: matching discipline}
\end{defin}
It is worth mentioning that the greedy matching discipline is a strategy where each arriving item $x$ must immediately match a compatible item that exists.
This implies that the greedy matching disciplines focus on making the best available decision at the current moment, without deferring to future options or considering long-term outcomes.

\begin{defin}[Greedy Matching Discipline]
    Let $m$ be an arbitrary state and $x$ be any item arriving in the system.
    If for all states $r\in R(m,x)$, we have $|r|=|m|-1$ and $x\notin r$, then the matching discipline is considered as a greedy matching discipline, denoted by $GD$.
    \label{def: greedy matching discipline}
\end{defin}
Typical greedy matching disciplines are First Come First Match, Match the Longest, RANDOM, and Priority.

\begin{defin}[RANDOM Discipline]
Under the RANDOM matching discipline, the item that is matched with the arriving item is chosen uniformly at random among all the items waiting in the system which are compatible with it.
Therefore, for $x\in\Gamma(m)$ and $u\in\Delta(m,x)$,
\[\mu_{R(m,x)}(m-e_u)=\frac{m[u]}{\sum_{v\in\Delta(m,x)}m[v]}.\]
\end{defin}

\begin{defin}  [Priority Discipline]
\label{def:priority}  
    Let $\kappa:V\rightarrow\mathbb{N}$ be a priority function, a small value meaning a high priority.
    Under the priority discipline associated with $\kappa$m the arriving item $x$ is matched with an item whose type belongs to 
    \[\Delta^\kappa(m, x)=\arg \min _{u \in \Delta(m, x)} \kappa(u).\]
    When $\kappa$ is injective, the priority is said to be \emph{global} and the arriving item always matches the item of highest priority.
    When $\kappa$ is not injective, several types share the same priority level, and the priority is said to be \emph{partial}.
\end{defin}
We illustrate the above definitions with an example of a ring compatibility graph $C_n$ and the RANDOM matching discipline.

\begin{exa}
    Consider a ring compatibility graph $C_n$ with $n$ odd, and the RANDOM matching discipline. 
    The associated Markov chain is denoted by $\mathcal{M}\left(C_n,(\alpha_i)_{i=0,\cdots,n-1}, RANDOM\right)$.
    Let $m$ be an arbitrary state and let $x$ be an arriving item with $x\in\Gamma(m)$.
    Since every node of $C_n$ has exactly two neighbours, $\Delta(m,x)$ contains one or two types.
The arrival of $x$ provokes the transition to the state $r=m-e_j$ by deleting one item of a compatible type $j\in\Delta(m,x)$. 
    The transition probability is
    $$\mu_{R(m,x)}(r)=\frac{m[j]}{\sum_{k\in\Delta(m,x)}m[k]}.$$
\end{exa}

Our goal is to establish relations between the automorphisms of the compatibility graph and the lumping of the Markov chain. Before that, we recall known results about the stability of the DTMC involved. 

\subsection{Stability}
The results of this subsection come from \cite{MaMo18}.
\begin{defin}[$\operatorname{NCOND}$ condition]
Let $G=(V,E)$ a compatibility graph and let $\alpha_1,..,\alpha_n$ be the arrival probabilities of the items of type 
$1$ to $n$. 
The  condition $\operatorname{NCOND}(G)$ on $ (\alpha_1,..,\alpha_n)$ is defined by  
 \[
 \forall I \in  {\cal{IS}}, ~~ \alpha_{I} < \alpha_{\Gamma(I)} . 
 \]
\end{defin} 

\begin{prop}
$\operatorname{NCOND}(G)$ is a necessary condition for the stability of the associated Markov chain.  
\end{prop} 

\begin{prop}
If $G$ is bipartite, then $\operatorname{NCOND}(G) = \emptyset$. Consequently, the Markov chain associated with a 
bipartite compatibility graph is unstable for all arrival probability vectors.
\label{prop:bipartite}
\end{prop}
\begin{rem}
\label{rem:bipartite}    Property \ref{prop:bipartite} is the reason why we restrict ourselves to rings $C_n$ with $n$ odd in the illustrative examples of Section~3. The rings of even nodes are bipartite, hence unstable.
    Note that the lumpability results proved in this paper do not require the ergodicity of the chain. The stability is only needed if one wants to use the aggregated chain to compute a steady-state distribution.
\end{rem}

\subsection{Symmetry and non-reversibility}
Looking at the DTMC associated with RANDOM discipline, one 
can easily check that the directed graph associated with the chain is symmetrical
as shown below. 

\begin{proposition}
Let $m$ be a state and $u$ be a type. Assume that $u \notin \Gamma(m)$.  Then the item $u$ enters the system and is not matched. 
Thus, there is a transition to state $m + e_u$ with probability $\alpha_u$. 

Now from state $m + e_u$, any arrival of an item compatible with $u$ will provoke a transition 
to $m$. This transition has a probability $\alpha_{\Gamma(u)}$ multiplied by the probability
to select a type $u$ item among the compatible population. As this probability is positive, the transition exists, 
and the directed graph of the Markov chain is symmetric.
\end{proposition}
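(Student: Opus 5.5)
The statement claims two things: if state $m$ has a transition to state $m'$ under the RANDOM discipline, then $m'$ also has a transition back to $m$ (the transition graph is symmetric); and the reverse transition has positive probability. I will prove it by checking every possible transition out of an arbitrary state $m$. I will then show that each one has a reverse transition of positive probability. Under RANDOM, and indeed under any greedy discipline, an arrival of type $u$ in state $m$ has only two possible effects. If $u\notin\Gamma(m)$, the item joins the queue and the new state is $m+e_u$. If $u\in\Gamma(m)$, one item of some type $v\in\Delta(m,x)$ leaves and the new state is $m-e_v$. A no-arrival slot gives a self-loop, which is trivially symmetric. So it is enough to treat these two cases and show that each is reversed by a transition of positive probability.

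\textbf{Forward arrivals.} Take $u\notin\Gamma(m)$, so we move from $m$ to $m+e_u$ with probability $\alpha_u$. Let $m'=m+e_u$. Because $G$ is connected and has more than one node, $\Gamma(u)\neq\emptyset$; pick $w\in\Gamma(u)$. Since $u\notin\Gamma(m)$, no type in $\operatorname{supp}(m)$ is adjacent to $u$; hence $w\notin\operatorname{supp}(m)$ and $m'[w]=0$.

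Now suppose an item $w$ arrives in $m'$. Then $u\in\Delta(m',w)$, and $u$ is the only type in $\operatorname{supp}(m')$ adjacent to $w$ that can yield $m$. The RANDOM rule picks $u$ with probability $m'[u]/\sum_{v\in\Delta(m',w)}m'[v]$. This is strictly positive, since $m'[u]\ge 1$. The result of this choice is $m'-e_u=m$.

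Summing over $w\in\Gamma(u)$, the transition $m'\to m$ has probability at least $\alpha_w$ times this positive ratio for any single $w$. It is therefore positive as soon as some $\alpha_w>0$. I will assume that all $\alpha_i>0$, the standing assumption needed for $\operatorname{NCOND}$ to make sense.

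\textbf{Matchings.} Take $u\in\Gamma(m)$, with the matched type $v\in\Delta(m,u)$ chosen with positive probability, so we move from $m$ to $m'=m-e_v$. For the reverse transition we need an arrival of $v$ in $m'$ that is not matched, that is, $v\notin\Gamma(m')$.

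This holds because of the structure of the reachable states. Under a greedy discipline no two compatible items ever coexist, so $\operatorname{supp}(m)$ is an independent set for every reachable $m$. Indeed, an arriving item either is matched or finds no compatible item in the system. Since $v\in\operatorname{supp}(m)$ and $\operatorname{supp}(m')\subseteq\operatorname{supp}(m)$, independence gives that $v$ has no neighbour in $\operatorname{supp}(m')$. Hence an arrival of $v$ in $m'$ moves the chain to $m'+e_v=m$ with probability $\alpha_v>0$.

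\textbf{Main obstacle.} The delicate points are making explicit two invariants that the statement leaves implicit. The first is that reachable states have independent supports. This is the induction just described, starting from the empty state. The second is that every node has a neighbour, which follows from connectivity with $n\ge2$.

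Once these are stated, the argument is pure case analysis. It also shows the chain is not reversible in general: one then exhibits a cycle of states (on $C_3$, for example) where Kolmogorov's criterion fails, because the RANDOM ratios differ along the two orientations of the cycle.
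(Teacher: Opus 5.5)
Your proof is correct and follows the paper's own argument. The paper spells out only the insertion case: $m\to m+e_u$ is undone by an arrival $w\in\Gamma(u)$ that selects $u$ with positive RANDOM probability. You go further in two ways: you also handle deletions via the independent-support invariant of reachable states, and you give the precise lower bound $\alpha_w\,m'[u]/\sum_{v\in\Delta(m',w)}m'[v]$ in place of the paper's loose ``$\alpha_{\Gamma(u)}$ times the selection probability''.

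Assuming all $\alpha_i>0$ is more than you need, and NCOND does not require it. NCOND on singletons already gives $\alpha_{\Gamma(u)}>\alpha_u\ge 0$, and a type with $\alpha_v=0$ never appears in a reachable state.

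One correction to your closing aside: $C_3=K_3$ cannot exhibit non-reversibility. Its reachable states are the $ke_i$, and its transition graph is a tree (three rays glued at the empty state), so Kolmogorov's criterion holds trivially. The paper's non-reversibility argument needs two non-adjacent nodes, so you need an odd ring with at least five nodes, e.g.\ $C_5$.
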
 

Unfortunately, the symmetry of the chain does not imply the reversibility of the chain associated with the arbitrary graph in general.

\begin{theo} 
Let $G$ be an arbitrary connected compatibility graph with
two nodes $x$ and $y$ which are not connected. 
Then the Markov chain associated with $G$, the RANDOM matching  discipline and arrival probability vector
$(\alpha_1,...,\alpha_n)$ which insures ergodicity, 
is not reversible. 
\end{theo}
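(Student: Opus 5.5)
The plan is to use Kolmogorov's criterion. An ergodic chain is reversible if and only if, for every cycle of states, the product of the transition probabilities along the cycle equals the product along the reversed cycle. By the Proposition of Section~2 every transition has a reverse transition, so reversed cycles always exist. It therefore suffices to exhibit, for each ergodic arrival vector, one cycle on which the two products differ. I would build these cycles on states supported on $\{x,y\}$ only. Since $y\notin\Gamma(x)$, such states are unmatched configurations, and every transition probability between them is explicit. Write $A=\alpha_{\Gamma(x)\setminus\Gamma(y)}$, $B=\alpha_{\Gamma(y)\setminus\Gamma(x)}$ and $C=\alpha_{\Gamma(x)\cap\Gamma(y)}$. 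Under RANDOM, from the state $a e_x+b e_y$ the probability of removing one $x$ is $A+C\,\frac{a}{a+b}$, and the probability of removing one $y$ is $B+C\,\frac{b}{a+b}$.

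First take the cycle $0\to e_x\to e_x+e_y\to e_y\to 0$. Cancelling $\alpha_x\alpha_y$, the two products differ by $C(A-B)/2$. Next take the cycle $e_x\to e_x+e_y\to 2e_x+e_y\to 2e_x\to e_x$. The forward product is proportional to $(B+C/3)(A+C)$ and the backward product to $(A+2C/3)(B+C/2)$, so they differ by $C(2B-A)/6$. Both differences vanish only if $C=0$, or if $A=B=0$ with $C>0$. Hence non-reversibility follows immediately whenever the chosen non-adjacent pair has a common neighbour and the two neighbourhoods are not identical.

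The remaining work is to remove the two degenerate situations by re-choosing the pair; this step holds for every admissible $\alpha$, not generically. If $\Gamma(x)\cap\Gamma(y)=\emptyset$, connectivity gives a shortest path from $x$ to $y$ of length at least $3$. Two nodes at distance exactly $2$ on that path are non-adjacent and share a neighbour, so I replace $(x,y)$ by such a pair. If this new pair still has identical neighbourhoods (twins), I move on to the next case.

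The twin case is the main obstacle. If $\Gamma(x)=\Gamma(y)$, then every product on $\{x,y\}$-supported cycles matches, as one expects from the twins Property of the introduction. I would look for some other non-adjacent pair $(u,v)$ with a common neighbour and $\Gamma(u)\neq\Gamma(v)$. Such a pair fails to exist only when non-adjacency is an equivalence relation consisting of twin classes, that is, when $G$ is complete multipartite. In that last case I would use three-type cycles mixing $x$ with a node $z\in\Gamma(x)$ and with a third class. The aim is to exploit that the removal probability $\frac{m[x]}{m[x]+m[y]}$ interacts with arrivals of $z$, which are matched greedily and change the population counts. I would compute a concrete small cycle, for instance through the states $e_x$, $e_x+e_y$ and $e_z$ connected via matches, and check that the products differ by a quantity that is positive whenever $\alpha_z>0$. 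I have not yet identified a specific cycle that works here, and I expect this verification to be the delicate part of the proof.
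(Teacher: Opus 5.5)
Your Kolmogorov computations are correct: the cycle differences $C(A-B)/2$ and $C(2B-A)/6$ check out. Your reduction is also correct. If no non-adjacent pair has a common neighbour and distinct neighbourhoods, then every pair at distance $2$ is a twin pair, the diameter is at most $2$, and $G$ is complete multipartite. (You need all $\alpha_i>0$ here, so that $C>0$ and $A+B>0$ really encode the graph conditions.)

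The gap is the last case, and it cannot be filled, because there the chain is reversible. In a complete multipartite graph every state is supported in a single part $P$. From such a state $m$, an arriving $w\in P$ is added. Any arriving type outside $P$ is compatible with every item present, so RANDOM removes a type-$w$ item with probability $\alpha_{V\setminus P}\,m[w]/|m|$. Each part is thus a multiclass processor-sharing queue, and the parts meet only at $\emptyset$. The measure
\[
\pi(m)=\pi(\emptyset)\,\frac{|m|!}{\prod_{w\in P}m[w]!}\,\prod_{w\in P}\Bigl(\frac{\alpha_w}{\alpha_{V\setminus P}}\Bigr)^{m[w]}
\]
satisfies, for $w\in P$, $\pi(m)\,\alpha_w=\pi(m+e_w)\,\alpha_{V\setminus P}\,\frac{m[w]+1}{|m|+1}$. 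It is summable exactly when $\alpha_P<\alpha_{V\setminus P}$ for every part, i.e. under $\operatorname{NCOND}(G)$.

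The smallest instance is the diamond ($K_4$ minus an edge, the graph mentioned in the introduction) with, say, $\alpha_x=\alpha_y=0.1$ and $\alpha_u=\alpha_v=0.2$. Here the chain is ergodic and reversible, although $x$ and $y$ are not adjacent. Your candidate cycle through $e_x$, $e_x+e_y$ and $e_z$ with $z\in\Gamma(x)$ cannot help either. The state $e_z$ lies in another part's orthant and is reachable from $e_x+e_y$ only through $\emptyset$. Any such cycle therefore splits into cycles inside single orthants, each of which balances.

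So the statement as written is false. Your first two steps do give a complete proof, for every ergodic $\alpha$ with positive entries, of the corrected theorem: the RANDOM chain is not reversible whenever $G$ is connected and not complete multipartite.

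The paper's own proof compares only the two paths from $e_x+2e_y$ to $\emptyset$ and concludes that the denominators differ ``in general''. Those denominators coincide in particular when $C=0$ or $A=B=0$, so the paper has the same blind spot. Your case analysis is sharper than the paper's: it isolates the one family of graphs on which the claim actually fails.
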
 
\begin{proof}
Remember that if a DTMC is ergodic and reversible, with stochastic matrices $Q$, 
 then for any state $m$, the product 
of ratio of probabilities $\frac{Q(m_i,m_{i+1})}{Q(m_{i+1},m_i)}$ among a path from $x$ to the empty state
is constant for all direct paths. 

Now, consider state $m$, such that $m[x]=1$, $m[y] = 2$ and $m[u]=0$ for all other types
of item. Remark that this state exists as nodes $x$ and $y$ are not connected in $G$, there, the items do not match. 
Let $V1 = \Gamma(x) \cap \Gamma(y)$, 
$V2 = \Gamma(x) \setminus \Gamma(y)$, and $V3 = \Gamma(y) \setminus \Gamma(x)$.  
Let us now consider two paths from $m$ to $\emptyset$:
\begin{itemize}
\item Path  $m$, $m_1$ with  $m_1[x]=0$, $m_1[y] = 2$ and $m_1[u]=0$, 
$m_2$ with  $m_2[x]=0$, $m_2 [y] = 1$ and $m_2[u]=0$ and finally $\emptyset$: the ratio of
probability is $$\frac{\alpha_x \alpha_y^2}{(\alpha_{V1} *1/3 + \alpha_{V2}) (\alpha_{V1}+\alpha_{V3})^2}$$.
The coefficient $1/3$ is the ratio of probability to select the type $x$ item among the three compatible items. 

 \item Path  $m$, $m_3$ with  $m_3 [x]=1$, $m_3 [y] = 1$ and $m_3[u]=0$, 
$m_4$ with  $m_4 [x]=1$, $m_4 [y] = 0$ and $m_4[u]=0$ and finally $\emptyset$: 
 the ratio of
probability is $$\frac{\alpha_x \alpha_y^2}{(\alpha_{V1} *2/3 + \alpha_{V3}) (\alpha_{V1} *1/2 + \alpha_{V3}) (\alpha_{V1}+\alpha_{V2})}$$.

\end{itemize} 
In general, without further assumptions on the compatibility graph, the denominators are not equal and the chain is not reversible. 
\end{proof}

This property also holds when the compatibility relation is modeled by a multigraph (we add some 
self-loops in the graph).

\begin{cor} 
Let $H$ be an arbitrary connected 
compatibility multigraph (i.e. a graph where some nodes have a self-loop) 
with
two nodes $x$ and $y$ which are not connected. Note that if all the nodes have a self loop, the state space is finite. 
Then, a similar proof shows that the Markov chain associated with $H$, the RANDOM matching  discipline and arrival probability vector
$(\alpha_1,...,\alpha_n)$ 
is also not reversible. 
\end{cor}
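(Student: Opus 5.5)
The plan is to adapt the Kolmogorov-criterion argument used in the proof of the preceding theorem to the multigraph $H$. The new feature is that a self-loop at a node $v$ makes type $v$ compatible with itself. An arriving $v$ item therefore matches a waiting $v$ item, so states contain at most one item of each looped type. This forces a change in the witness state: the state with $m[y]=2$ used in the theorem may not exist if $y$ carries a loop.

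First, fix the notation. Put $V_1=\Gamma(x)\cap\Gamma(y)$, $V_2=\Gamma(x)\setminus\Gamma(y)$ and $V_3=\Gamma(y)\setminus\Gamma(x)$, where $\Gamma$ now includes $x$ (resp.\ $y$) itself when it has a loop. Because $x$ and $y$ are not adjacent, neither lies in the other's neighbourhood, so $x\in V_2$ iff $x$ has a loop, and $y\in V_3$ iff $y$ has a loop.

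Next, choose the witness state according to the loops:
\begin{itemize}
\item If $y$ has no loop, keep the state $m$ of the theorem, with $m[x]=1$, $m[y]=2$ and all other entries $0$.
\item If $y$ has a loop but $x$ does not, exchange the roles of $x$ and $y$.
\item If both carry loops (in particular in the finite case where all nodes are looped), use the state $m=e_x+e_y$. Its two paths to $\emptyset$ are $m\to e_y\to\emptyset$ and $m\to e_x\to\emptyset$.
\end{itemize}

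Then write Kolmogorov's ratio along the two paths, as in the proof above. The upward transition $s-e_u\to s$ has probability $\alpha_u$, since $u\notin\Gamma(s-e_u)$ along our paths. The downward transition $s\to s-e_u$ has probability $\sum_{z\in\Gamma(u)}\alpha_z\,\frac{s[u]}{\sum_{w\in\Delta(s,z)}s[w]}$.

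In the doubly looped case the numerators of the two ratios are both $\alpha_x\alpha_y$. The first denominator is
\[(\alpha_{V_1}/2+\alpha_{V_2})(\alpha_{V_1}+\alpha_{V_3}),\]
and the second is
\[(\alpha_{V_1}/2+\alpha_{V_3})(\alpha_{V_1}+\alpha_{V_2}).\]
Their difference is $\tfrac12\alpha_{V_1}(\alpha_{V_2}-\alpha_{V_3})$. This is generically nonzero whenever $V_1\neq\emptyset$ and $\alpha_{V_2}\neq\alpha_{V_3}$. In the other cases, the same computation as in the theorem gives nonequal denominators, with $V_2,V_3$ possibly containing $x,y$.

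The main obstacle is making "in general" precise, exactly as in the theorem: for special symmetric choices, such as $V_1=\emptyset$ or $\alpha_{V_2}=\alpha_{V_3}$, the two products may coincide. I would handle this the way the paper does, by asserting non-reversibility for generic arrival vectors satisfying the ergodicity conditions. If $V_1=\emptyset$, I would try a longer state instead, for instance adding a third vertex compatible with both $x$ and $y$, to break the symmetry. Finiteness of the state space when all nodes are looped is used only to ensure that the witness states exist and that ergodicity makes Kolmogorov's criterion applicable.
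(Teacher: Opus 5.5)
Your proposal is correct and follows the route the paper has in mind (it only says ``a similar proof shows''): the same two-path Kolmogorov comparison as the preceding theorem, with the witness replaced by $e_x+e_y$ when both endpoints are looped, which gives the difference $\tfrac12\alpha_{V_1}(\alpha_{V_2}-\alpha_{V_3})$. Your genericity caveat is genuinely needed, not just a technicality: for the looped path $1$--$2$--$3$, the square $\emptyset\to e_1\to e_1+e_3\to e_3\to\emptyset$ is the only cycle of the chain, and its two products differ by $\tfrac12\alpha_1\alpha_2\alpha_3(\alpha_1-\alpha_3)$, so that chain is reversible when $\alpha_1=\alpha_3$; also, when $V_1=\emptyset$ there is no vertex compatible with both $x$ and $y$, so instead of enlarging the state you should replace $(x,y)$ by the first and third vertices of a shortest $x$--$y$ path, which are non-adjacent and have a common neighbour.
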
 
We are still looking for sufficient conditions on the compatibility graph and the arrival rates to prove the reversibility of these Markov chains.

\subsection{Strong aggregation and lumpability}
Let us now recall the definitions of strong aggregation and ordinary lumpability.

\begin{defin}[Strong Aggregation]
Let $W$ be a Markov chain on the state-space $\mathcal{S}$. Consider a partition $(B_1,..,B_k)$ of $\mathcal{S}$. We define the aggregated process  
 $Y$ as follows:
 \[
 Y_n = i \quad \Longleftrightarrow\quad  W_n \in B_i. 
 \]
We aim to find conditions under which $Y$ also forms a Markov chain. We denote $Y$ as a strong aggregation of $W$ for 
partition $(B_1,..,B_k)$. 
\end{defin}

\begin{defin}[Ordinary Lumpability] 
\label{lump} 
W is strongly lumpable for partition $(B_1,..,B_k)$ of its state-space if, for all subset indices $i$ and $j$, and for all states $m_1$ 
and $m_2$ in $B_i$, the following condition holds: 
\[
Pr(W_{n+1} \in B_j \vert W_n = m_1) = Pr(W_{n+1} \in B_j \vert W_n = m_2). 
\]
Here, $B_i$ is denoted as macro-state $i$. Consequently, the blocks of the transition matrix associated with the macro-states of the partition exhibit a constant row sum. 
\end{defin}

The primary challenge lies in finding the appropriate partition and verifying the lumpability condition. 
We first present a simpler example before delving into the more general and formal results.

\section{The ring compatibility graph $C_n$ with $n$ odd}

We consider the ring graph $G=C_n$ with $n$ nodes, $n$ being odd for the sake of stability (recall Remark \ref{rem:bipartite}).
In Figure \ref{fig:C_5}, we have drawn a part of the Markov chain associated with a ring of size $5$ and the RANDOM discipline. 
 We only represent states with at most three items. The states and the loops on each node associated with probability $\alpha_\emptyset$ are omitted for the sake of readability.  Clearly, the graph of the chain
 has a large number of symmetries and one may ask whether this property can be used to simplify the analysis. 
 
 \begin{figure} [hbtp] 
\begin{center}
 \includegraphics[scale=0.122] {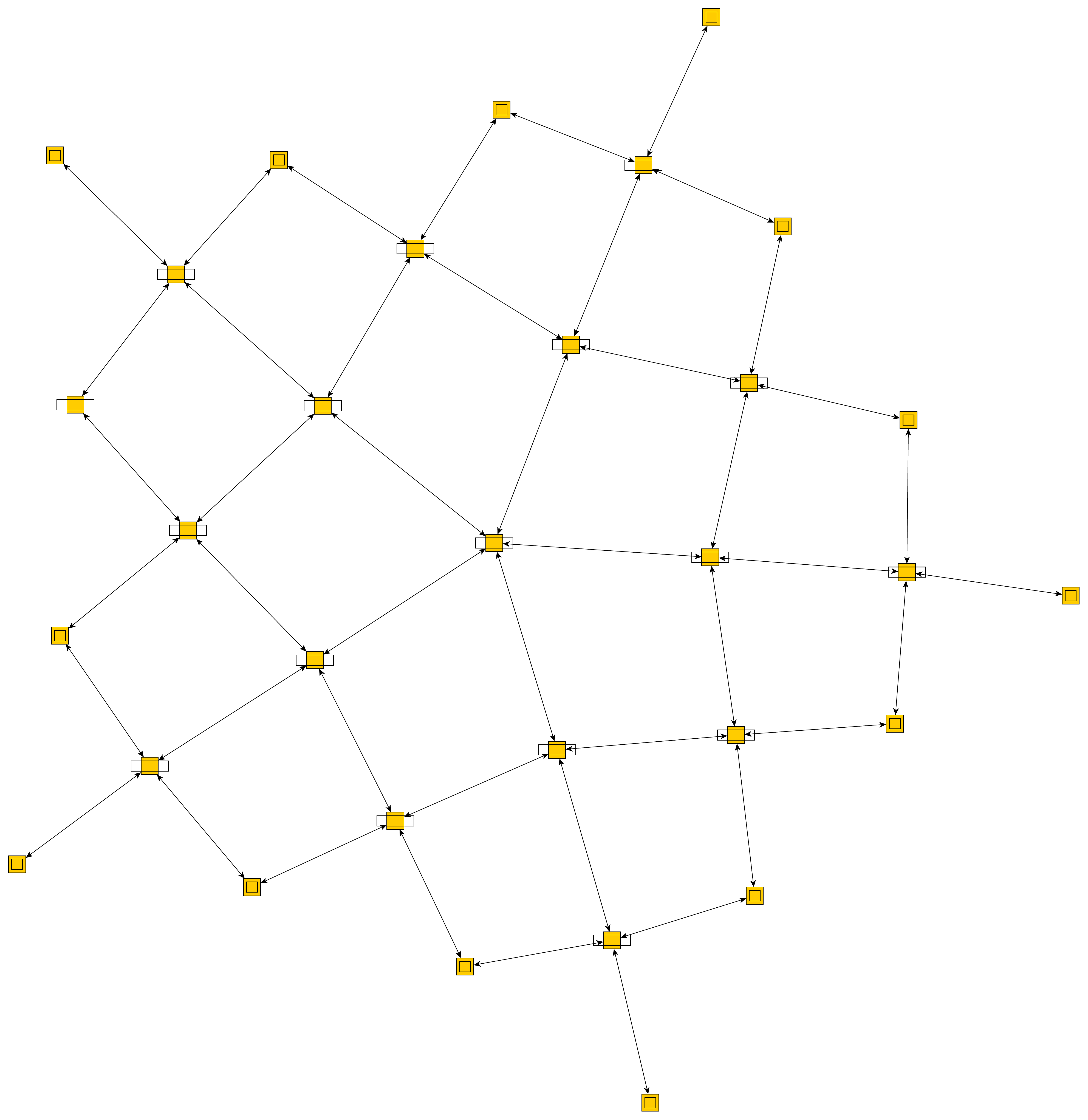} 
 \caption{Graph of the chain associated with $C_5$. }
 \end{center} 
 \label{fig:C_5}
 \end{figure} 
The nodes are labelled from $0$ to $n-1$ to simplify the definition of rotations.

\begin{note}[Operation $\oplus$]
    For $i,j\in\{0,\cdots,n-1\}$, we write $i\oplus j=(i+j)\mod n$
\end{note}
\begin{defin}[Rotations]
For $k\in \{0,\cdots,n-1\}$, the permutation $\sigma_k$ of the set $V$ of nodes of $C_n$ is defined by $\sigma_k(i) = i \oplus k$. Clearly, $\sigma_k ^{-1} (i) = i \oplus (n- k) $. 
Recall that the set of rotations forms a subgroup of the permutations.
\end{defin}

\begin{note} 
The permutation $\Lambda_k  $ associated with $\sigma_k  $  is a 
permutation on the state-space $\mathcal{S}$ of the Markov chain, defined as $\Lambda_k (m) = 
(m_{\sigma_k^{-1} (0)}, ..., m_{\sigma_k^{-1} (n-1)})$.  
\end{note} 

We consider the following partition of the state-space, and we prove the lumpabillity of the Markov chain 
for this partition when certain constraints on the arrival distribution and on the discipline are satisfied. 
\begin{equation} \label{part1} 
{\cal B}_m = \{ p \in {\cal S} ~\vert ~\exists k, ~p=\Lambda_k (m) \}. 
\end{equation} 
Clearly, the states of ${\cal B}_m$ are obtained by rotating the description of $m$. 
Since each node in the compatibility 
graph has no self-loop, the state-space is infinite.
However, the number of transitions out of an arbitrary state is finite.
More precisely, since the graph is a ring, each node has only two neighbours.
The number of possible transitions provoked by arrivals is relatively small: one inclusion, one deterministic deletion, or one deletion among two types of items. The proofs below are often based on this decomposition of the transitions. 

\begin{lemma} \label{lemma:cycle} 
Let $m$ be an arbitrary state and $x$ an item type. Let $p$ be a state in the same macro-state as $m$. 
Therefore, there exists $k$ such that $p=\Lambda_k (m)$. Let $y = \sigma_k (x) $ ($\sigma_k$ is the rotation  associated 
with the same $k$). Based on the type of transition, one can state two properties under arbitrary greedy matching disciplines: 
\begin{itemize} 
\item 
 If the action of $x$ on $m$ leads to a state $r$, then the action of $y$ on $p$ leads to a state $s$. Furthermore, $r$ and  $s$ are in the same macro-state. 
\item If the action of $x$ on $m$ leads to a state $r$ drawn among $\{r_1, r_2\}, $
then the action of $y$ on $p$ leads to a state $s$ among $\{s_1, s_2\}$. 
Furthermore, $r_1$ and  $s_1$ are in the same macro-state, and a similar statement 
holds for $r_2$ and  $s_2$. 
\end{itemize} 
\end{lemma}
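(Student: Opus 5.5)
The plan is to prove everything from one structural fact: $\Lambda_k$ transports supports, neighbourhoods and compatible sets along $\sigma_k$. First, from $p=\Lambda_k(m)$ we get $p[\sigma_k(u)]=m[u]$ for every type $u$, hence $\operatorname{supp}(p)=\sigma_k(\operatorname{supp}(m))$ and $|p|=|m|$. Since $\sigma_k$ is a rotation of $C_n$, it is a graph automorphism, so $\Gamma(\sigma_k(u))=\sigma_k(\Gamma(u))$. It follows that $\Gamma(p)=\sigma_k(\Gamma(m))$ and
\[\Delta(p,y)=\Gamma(\sigma_k(x))\cap\sigma_k(\operatorname{supp}(m))=\sigma_k(\Delta(m,x)).\]
In particular, $x\in\Gamma(m)$ if and only if $y\in\Gamma(p)$, and $|\Delta(p,y)|=|\Delta(m,x)|\in\{0,1,2\}$ because every node of the ring has exactly two neighbours.

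Next, I would split according to $|\Delta(m,x)|$, which is the decomposition of transitions described just before the lemma.

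If $\Delta(m,x)=\emptyset$, then $x$ is added and $r=m+e_x$. Likewise $\Delta(p,y)=\emptyset$, so $s=p+e_y$. The identity $\Lambda_k(m+e_x)=\Lambda_k(m)+e_{\sigma_k(x)}=p+e_y$ holds coordinatewise, so $s=\Lambda_k(r)\in{\cal B}_r$.

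If $\Delta(m,x)=\{u\}$, a greedy discipline must produce $r=m-e_u$ (Definition \ref{def: greedy matching discipline}, with $|r|=|m|-1$ and $x\notin r$). Then $\Delta(p,y)=\{\sigma_k(u)\}$ and $s=p-e_{\sigma_k(u)}=\Lambda_k(r)$.

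If $\Delta(m,x)=\{u_1,u_2\}$, the greedy discipline produces $r\in\{r_1,r_2\}$ with $r_i=m-e_{u_i}$. Symmetrically, $s\in\{s_1,s_2\}$ with $s_i=p-e_{\sigma_k(u_i)}=\Lambda_k(r_i)$, so $r_i$ and $s_i$ lie in the same macro-state. These three cases give both bullets of the statement.

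The main obstacle is that the lemma is stated for arbitrary greedy disciplines. The argument above shows which successor states are possible, and that they correspond under $\Lambda_k$. It does not show that the probabilities agree, i.e.\ that $\mu_{R(p,y)}(s_i)=\mu_{R(m,x)}(r_i)$, and this fails for, for example, a global priority whose ranking is not rotation-invariant. I would therefore state the lemma only at the level of reachable states and their pairing $s_i=\Lambda_k(r_i)$. Any probability claim would be deferred to a consistency assumption $\mu_{R(\Lambda_k(m),\sigma_k(x))}(\Lambda_k(r))=\mu_{R(m,x)}(r)$.

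For RANDOM this assumption holds automatically, since the weights $m[u_i]$ equal $p[\sigma_k(u_i)]$. It also holds for ML by the same equality of counts. Lumpability would then also need $\alpha_x=\alpha_y$, for instance uniform arrivals on the ring, but that belongs to the subsequent theorem rather than to this lemma.
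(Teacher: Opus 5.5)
Your proof is correct and follows the paper's own argument: the same three-way split on $|\Delta(m,x)|\in\{0,1,2\}$, each case closed by the identity $\Lambda_k(m\pm e_u)=p\pm e_{\sigma_k(u)}$. Your set-level identity $\Delta(p,y)=\sigma_k(\Delta(m,x))$ replaces the paper's explicit index computations such as $p[y\oplus 1]=m[x\oplus 1]$, which is exactly how the paper later argues Lemma~\ref{lemma:arbitrary} for general automorphisms. Your reading of the lemma as purely structural, with probabilities deferred to an invariance assumption, also matches the paper's remark and Corollary~\ref{cor:conditions}; one caveat is that your aside about ML holds only if ties are broken symmetrically (e.g.\ uniformly), not by a fixed ordering of the types.
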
 
To formalize the analysis, we identify three distinct cases which describe the effect of the arrival of an item of type $x$ on the state $m$.
A detailed proof is given for each case.
\begin{enumerate}
    \item Case 1: $\Gamma(x)\cap \operatorname{supp}(m)=\emptyset$. With probability 1, the item $x$ is added to the state $m$, which becomes $m+e_x$.
    \item Case 2: $\Delta(m,x)\neq\emptyset$  and $|\Delta(m,x)|=1$.
    In this case, we assume that $z$ is the type.
    With probability 1, the resulting state could be $r=m-e_z$, obtained by deleting one item of type $z$.
    \item Case 3: $\Delta(m,x)\neq\emptyset$ and $|\Delta(m,x)|=2$.
    In this case, we assume that $u$ and $z$ are the two neighbours of $x$.
    Thus, the resulting state could be $r_1=m-e_u$ obtained by deleting item $u$ with probability $\mu_{R(m,x)}(r_1)$, or state $r_2=m-e_z$ obtained by deleting item $z$ with probability $\mu_{R(m,x)}(r_2)$.
\end{enumerate}
Note that the first case does not depend on the graph $G$.
The last two cases take into account that all the nodes of $G$ have two neighbors.
\begin{proof}
The proof can be organized following the three cases:
\begin{itemize}
    \item Case 1: In the ring compatibility graph, if $\Gamma(x)\cap \operatorname{supp}(m)=\emptyset$, this implies that $m[x\oplus 1]=0$ and $m[x\oplus(n-1)]=0$.
    First, we must demonstrate that the effect of $y$ on $p$ also results in a new item arriving in $p$, by showing that
    $p[y\oplus 1]=0$ and $p[y\oplus(n-1)]=0$.

    Given that $p=\Lambda_k(m)$, we have $p[i]=m[\sigma_{k}^{-1}(I)]=m[i\oplus(n-k)]$ for all $i$.
    Since $y=x\oplus k$, we can compute:
    $$p[y\oplus 1]=m[x\oplus k\oplus(n-k)\oplus 1]=m[x\oplus 1]=0,$$ and 
    $$p[y\oplus (n-1)]=m[x\oplus k\oplus(n-k)\oplus (n-1)]=m[x\oplus (n-1)]=0.$$
    Therefore, the arrival of $y$ in state $p$ also results in the arrival of a new item.

    Next, we verify that these states belong to the same macro-state.
    By assumption, both states $m$ and $p$ belong to the same macro-state $\mathcal{B}_m$.
    Therefore, there exists some $k$ such that $p=\Lambda_k(m)$.
    Let $r$ denote the state resulting from the arrival of $x$ in state $m$, and $s$ denote the state resulting from the arrival of $y=\sigma_k(x)$ in $p=\Lambda_k(m)$.
    We need to prove $s\in\mathcal{B}_r$. It was previously shown that the action of $x$ on $m$ leads to $r=m+e_x$.
    Similarly, the action of $y$ on $p$ results in state $s=p+e_y$. Furthermore, $$\Lambda_k(r)=\Lambda_k(m+e_x)=\Lambda_k(m)+e_{\sigma_k(x)}=p+e_y=s.$$
    Therefore, state $s$ belongs to partition $\mathcal{B}_r$, concluding Case 1.
    
    \item Case 2: In the case where $\Delta(m,x)\neq\emptyset$ and $|\Delta(m,x)|=1$, we assume $\Delta(m,x)=\{z\}$.
    Without loss of generality, let $z=x\oplus 1$(the other case is $z=x\oplus (n-1)$).
    The effect of item $x$ is a transition to state $r=m-e_{z}$, which gives $m[x\oplus 1]=m[z]$ and $m[x\oplus (n-1)]=0$. Consequently, we have $r=m-e_{x\oplus 1}$.
    Now, consider $p=\Lambda_k(m)$ and the arrival of item $y=\sigma_k(x)$ at state $p$. We have:
    $$p[y\oplus 1]=p[x\oplus k\oplus 1]=m[x\oplus k\oplus 1\oplus (n-k)]=m[x\oplus 1]=m[z],$$
    and $$p[y\oplus (n-1)]=p[x\oplus k\oplus (n-1)]=m[x\oplus (n-1)]=0.$$
    Therefore, the arrival of item $y$ provokes the transition from state $p$ to state $s=p-e_{x\oplus k\oplus 1}$ with probability 1.
    Furthermore, $$\Lambda_k(r)=\Lambda_k(m-e_{x\oplus 1})=\Lambda_k(m)-e_{x\oplus k\oplus 1}.$$
    Thus, we have $\Lambda_k(r)=s$, and $r$ and $s$ are in the same macro-state.
    
    \item Case 3: In the case where $\Delta(m,x)\neq\emptyset$ and $|\Delta(m,x)|=2$, we assume that $\Delta(m,x)=\{u,z\}$.
    The resulting states could be state $r_1=m-e_{u}$ obtained by deleting item $u$ with probability $\mu_{R(m,x)}(r_1)$, or state $r_2=m-e_{z}$ obtained by deleting item $z$ with probability $\mu_{R(m,x)}(r_2)$.
    Without loss of generality, let $u=x\oplus 1$ and $z=x\oplus(n-1)$. Therefore, $m[x\oplus 1]=m[u]$ and $m[x\oplus (n-1)]=m[z]$.
    Let $p=\Lambda_k(m)$ and $y=\sigma_k(x)$.
    Hence, we have $$p[y\oplus 1]=p[x\oplus k\oplus 1]=m[x\oplus k\oplus 1\oplus(n-k)]=m[x\oplus 1]=m[u]$$,
    and $$p[y\oplus (n-1)]=p[x\oplus k\oplus (n-1)]=m[x\oplus (n-1)]=m[z].$$
    The arrival of item $y$ provokes the deletion of item $y\oplus 1$ with probability $\mu_{R(p,y)}(s_1)$, leading to state $s_1$ or item $y\oplus (n-1)$ with probability $\mu_{R(p,y)}(s_2)$, leading to state $s_2$.
    The same arguments as in Case 2 prove that $r_1$ and $s_1$ are in the same macro-state. Similarly, we have the same result for states $r_2$ and $s_2$.
\end{itemize}
\end{proof}

Lemma \ref{lemma:cycle} is a purely structural statement. 
It says that the sets of possible successor states correspond to each other under the rotation, but nothing about the probabilities of these successor states. 
In Cases 1 and 2 the transition probabilities are equal to $1$ because there is no ambiguity, but in Case 3 the arrival of $x$ provokes a transition to $r_1$ (or $r_2$) with probability $\mu_{R(m,x)}(r_1)$ (or $\mu_{R(m,x)}(r_2)$), while the arrival of $y$ provokes a transition to $s_1 = \Lambda_k(r_1)$ (or $s_2 = \Lambda_k(r_2)$) with probability $\mu_{R(p,y)}(s_1)$ (or $\mu_{R(p,y)}(s_2)$), and these two distributions may differ. 
We therefore add to Lemma \ref{lemma:cycle} two invariance assumptions, one on the arrival probabilities and one on the discipline, which together turn the correspondence between the successor states into an equality between the transition probabilities.

\begin{cor}
    Adopt the setting and the notation of Lemma \ref{lemma:cycle}. Assume that
\begin{enumerate}
    \item the arrival probabilities are invariant under the rotations:
    \[\alpha_x = \alpha_{\sigma_k(x)} \quad \forall x \in V\]
    \item the discipline is invariant under the rotations: 
    \begin{equation}
        \mu_{R(m,x)}(r) = \mu_{R(p,y)}(s) \quad \forall m, r \in\mathcal{S},x\in  V.
        \label{eq:equiv_mu}
    \end{equation}
\end{enumerate}
Then the two corresponding transitions have the same probability,
\begin{equation}
    \label{eq:equiv_alpha_mu}\alpha_x\mu_{R(m,x)}(r) = \alpha_y \mu_{R(p,y)}(s).
\end{equation}
\label{cor:conditions}
\end{cor}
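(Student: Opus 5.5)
The statement is essentially immediate once the correspondence of Lemma~\ref{lemma:cycle} is fixed. I would first make the setting precise. Take $m$, $p=\Lambda_k(m)$, $x$, and $y=\sigma_k(x)$ as in Lemma~\ref{lemma:cycle}. Let $r\in R(m,x)$ be any successor state, and let $s=\Lambda_k(r)$ be its image, which the lemma guarantees is a successor of $p$ under the arrival of $y$. I read hypothesis~(\ref{eq:equiv_mu}) with exactly these identifications, namely $p=\Lambda_k(m)$, $y=\sigma_k(x)$ and $s=\Lambda_k(r)$. This is the only place where some care is needed, because the hypothesis is written with free symbols $p,y,s$.

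The argument then splits into two factors.

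\textbf{Arrival factor.} Hypothesis~1 gives $\alpha_x=\alpha_{\sigma_k(x)}=\alpha_y$ directly.

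\textbf{Discipline factor.} I would follow the three cases of the proof of Lemma~\ref{lemma:cycle}.
\begin{itemize}
\item In Cases~1 and~2 the successor is unique, so $\mu_{R(m,x)}(r)=1=\mu_{R(p,y)}(s)$. The lemma already showed that $s=\Lambda_k(r)$ is the unique successor of $p$ under $y$.
\item In Case~3 there are two successors, $r_1=m-e_u$ and $r_2=m-e_z$. Their images $s_i=\Lambda_k(r_i)$ are exactly the two successors of $p$ under $y$, and hypothesis~(\ref{eq:equiv_mu}) gives $\mu_{R(m,x)}(r_i)=\mu_{R(p,y)}(s_i)$ for $i=1,2$.
\end{itemize}
Multiplying the two equalities yields (\ref{eq:equiv_alpha_mu}).

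The main, if mild, obstacle is checking that the pairing $r\mapsto s=\Lambda_k(r)$ used in the hypothesis is the same pairing produced by the lemma. In Case~3, if $u=z$ were possible, the labels could be confused; here it is not, since $n\ge 3$ is odd and so $x\oplus1\neq x\oplus(n-1)$. I would also point out that for RANDOM the hypothesis holds automatically: $p[y\oplus1]=m[x\oplus1]$ and $p[y\oplus(n-1)]=m[x\oplus(n-1)]$, so the ratios $m[j]/\sum_{k\in\Delta(m,x)}m[k]$ coincide. This shows the assumption is not vacuous.
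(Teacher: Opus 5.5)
Your proposal is correct and is essentially the paper's argument. The paper gives no separate proof: it treats the corollary as the immediate product of $\alpha_x=\alpha_{\sigma_k(x)}=\alpha_y$ with the assumed invariance $\mu_{R(m,x)}(r)=\mu_{R(p,y)}(s)$ under the pairing $s=\Lambda_k(r)$ from Lemma~\ref{lemma:cycle}, which is exactly how you read the hypothesis. Your case split is unnecessary, since hypothesis~2 already covers Cases~1 and~2, where both probabilities are trivially $1$; your RANDOM check mirrors Example~\ref{exa:rnd}.
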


\begin{exa}
\label{exa:rnd}
    The condition in \eqref{eq:equiv_mu} is naturally satisfied under certain matching disciplines. 
    For example, in the case of the RANDOM matching discipline previously defined.
    As demonstrated in the proof of Case 3 in Lemma \ref{lemma:cycle}, the arrival of item x provokes the deletion of item $x\oplus 1$ with probability according to the RANDOM matching discipline:
    $$\mu_{R(m,x)}(r_1)=\frac{m[x\oplus 1]}{m[x\oplus 1]+m[x\oplus (n-1)]}.$$
    Based on the assumption $p=\Lambda_k(m)$, we have proved the following equation:
    \begin{equation}
    \label{eq:exa}
        p[y\oplus 1]=m[x\oplus 1],\quad \text{and}\quad p[y\oplus (n-1)]=m[x\oplus(n-1)]
    \end{equation}
    In this scenario, the arrival of item $y$ provokes the deletion of item $y\oplus 1$ with probability based on \eqref{eq:exa}
    $$\mu_{R(p,y)}(s_1)=\frac{p[y\oplus 1]}{p[y\oplus 1]+p[y\oplus (n-1)]}.$$
    resulting in a transition to state $s_1$.
    Therefore, we could derive the following equation in this scenario:
    $$\mu_{R(m,x)}(r_1)=\mu_{R(p,y)}(s_1).$$
    Similarly, for the state $s_2=\Lambda_k(r_2)$, we have
    $$\mu_{R(m,x)}(r_2)=\mu_{R(p,y)}(s_2).$$
    Clearly, under the assumption that $\alpha_x=\alpha_y$, the transition probability satisfies, for any $s=\Lambda_k(r)$, the required condition holds:
    $$\alpha_x\mu_{R(m,x)}(r)=\alpha_y\mu_{R(p,y)}(s)$$
\end{exa}

\begin{exa}
    \label{exa:priority}
    We proved in Example \ref{exa:rnd} that we can satisfy the relevant conditions concerning the transition probability $\mu$ under the RANDOM matching discipline to further exemplify that the lumpability condition holds.
    In contrast, there also exist greedy matching disciplines for which the condition on $\mu$ fails even when $\alpha_i=\alpha$ for all $i$.
    We take the priority matching discipline of Definition \ref{def:priority} as an example.
    When the arrival of an item may provoke several matchings, the item is matched with the item of highest priority.
    Consider the ring $G=C_7$ and the global priority
    \[\kappa:0>1>2>3>4>5>6\]
    Consider the state $m=e_1+e_3$ and the arriving item $x=2$.
    We can obviously analyze this case under Case 3 of Lemma \ref{lemma:cycle} since $\Delta(m,x)=\{1,3\}$.
    Take $k=4$, so that $p=\Lambda_4(m)=e_5+e_0$ and $y=\sigma_4(2)=6$.
    The arrival of $x=2$ may provoke the deletion of the item $x\oplus 1=3$ leading to state $r_1=m-e_{3}$, or the deletion of item $x\oplus (n-1)=1$ leading to state $r_2=m-e_{1}$.
    As the priority of type $1$ is higher than the priority of type $3$, the arrival of $x=2$ deletes the item of type $1$:
    $$\mu_{R(m,x)}(r_1)=0,\quad \mu_{R(m,x)}(r_2)=1.$$

    Meanwhile, the arrival of item $y=\sigma_k(x)=6$ may provoke the deletion of the item $y\oplus 1=0$, leading to state $s_1=p-e_{0}=\Lambda_4(r_1)$, or the deletion of the item $y\oplus (n-1)=5$, leading to state $s_2=p-e_{5}=\Lambda_4(r_2)$.
    As the priority of type $0$ is higher than the priority of type $5$,
    $$\mu_{R(p,y)}(s_1)=1,\quad \mu_{R(p,y)}(s_2)=0.$$
    Obviously, even if $\alpha_i=\alpha$ for all item types, we still cannot satisfy the condition in \eqref{eq:equiv_alpha_mu}
\end{exa}
\begin{exa}
    \label{exa: partial priority}
    
    We now consider a partial priority matching discipline, where some types share the same priority level.
    Such a discipline assigns local priorities based on the structure of the underlying graph.
    To illustrate this, we consider the automorphism group of the cycle graph $C_7$, specifically the dihedral group $D_7$.
    Unlike previous examples, which were based on the rotation subgroup, we now consider the reflection subgroup.
    Each reflection fixes one node while swapping symmetric pairs of nodes.

    As a concrete example, we take the reflection permutation 
    $$\sigma_0=(0),(1,6),(2,5),(3,4),$$
    which fixes node $0$ and swaps the remaining nodes symmetrically.
    In this setting, we define a partial priority:
    $$\kappa:\{0\}>\{1,6\}>\{2,5\}>\{3,4\}.$$
    Consider an arbitrary state $m$, an arriving item $x$. 
    Let $p=\Lambda_{\sigma_0}(m)$ with an arriving item $y=\sigma_0(x)$, under the same reflection permutation $\sigma_0$.
    In this example, the partial priority matching discipline $\kappa$ will affect the results when $|\Delta(m,x)|=2$, so we will focus on this case.
Let $\Delta(m,x)=\{z_1,z_2\}$, so that the arrival of $x$ leads to $r_1=m-e_{z_1}$, or the deletion of the item $z_2$ leading to state $r_2=m-e_{z_2}$.
    Similarly, for the state $p=\Lambda_{\sigma_0}(m)$, the arrival of item $y=\sigma_0(x)$ would provoke the deletion of the item $\sigma_0(z_1)$ leading to state $s_1=p-e_{\sigma_0(z_1)}$, or the deletion of the item $\sigma_0(z_2)$ leading to state $s_2=p-e_{\sigma_0(z_2)}$.
Since $\kappa$ is constant on the orbits, we have:
    \begin{equation}
        \label{eq:partial priority}
        \kappa(z_1)=\kappa(\sigma(z_1)),\quad \kappa(z_2)=\kappa(\sigma(z_2))
    \end{equation}
Consequently, the condition in \eqref{eq:equiv_alpha_mu} hold as soon as $\alpha_x=\alpha_{\sigma_0(x)}$ for all $x$.
Note that the arrival probabilities do not need to be uniform here, they only have to be constant on the orbits of the subgroup of the automorphism group.
\end{exa}
    Thus, we demonstrate through two examples (\ref{exa:priority} and \ref{exa: partial priority}) of priority matching disciplines that when partial priority matching disciplines can obey invariance under permutations, the system is still lumpable even if matching disciplines involve priority rules for specific item types.

We can now establish our first result about strong aggregation of the DTMC. 
\begin{theo}\label{theo:cycle}
    Consider the ring $C_n$ with $n$ odd, an arbitrary greedy matching discipline $GD$, and uniform arrival probabilities, which means that for all $i$, we have $\alpha_i=\alpha$.
    If the discipline satisfies the condition in \eqref{eq:equiv_mu}, then the Markov chain $\mathcal{M}(C_n,(\alpha_i)_{i=0,\cdots,n-1},GD)$ is strongly aggregable for the partition defined in \eqref{part1}.
\end{theo}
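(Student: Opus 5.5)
We verify the ordinary lumpability condition of Definition~\ref{lump} for the partition \eqref{part1}. Fix two states $m$ and $p$ in the same macro-state, so that $p=\Lambda_k(m)$ for some rotation $\sigma_k$, and fix a target macro-state $\mathcal{B}$. The plan is to build a bijection between the elementary transitions leaving $m$ and those leaving $p$ that preserves both the probability and the destination macro-state. Summing over transitions ending in $\mathcal{B}$ then gives $Pr(W_{n+1}\in\mathcal{B}\mid W_n=m)=Pr(W_{n+1}\in\mathcal{B}\mid W_n=p)$.

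\textbf{Decomposition of transitions.} Every transition out of $m$ is produced by one of the following events: no arrival, with probability $\alpha_\emptyset$, which leaves the chain in $m$; or the arrival of some type $x\in V$, with probability $\alpha_x$, followed by the choice of a successor $r\in R(m,x)$ with probability $\mu_{R(m,x)}(r)$. Hence
\[
Pr(W_{n+1}\in\mathcal{B}\mid W_n=m)=\alpha_\emptyset\mathbf{1}_{m\in\mathcal{B}}+\sum_{x\in V}\alpha_x\sum_{r\in R(m,x),\,r\in\mathcal{B}}\mu_{R(m,x)}(r),
\]
and the same expression holds for $p$. The no-arrival term is equal for $m$ and $p$ because they lie in the same macro-state. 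For the other terms, we reindex the sum for $p$ through the bijection $x\mapsto y=\sigma_k(x)$ of $V$.

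\textbf{Matching the terms.} For a fixed $x$, Lemma~\ref{lemma:cycle} shows that $r\mapsto\Lambda_k(r)$ maps $R(m,x)$ onto $R(p,y)$. Here we use that $\Lambda_k$ is injective, that $\Lambda_k(m\pm e_u)=p\pm e_{\sigma_k(u)}$, and that $\sigma_k$ maps $\Delta(m,x)$ bijectively onto $\Delta(p,y)$ because $\sigma_k$ is a graph automorphism. Since $\Lambda_k(r)$ is a rotation of $r$, we have $r\in\mathcal{B}$ if and only if $\Lambda_k(r)\in\mathcal{B}$. The rotations form a group, so the macro-states are orbits and really do partition $\mathcal{S}$; this is what makes membership in $\mathcal{B}$ rotation-invariant. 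Uniform arrivals give $\alpha_x=\alpha=\alpha_y$. Together with \eqref{eq:equiv_mu}, Corollary~\ref{cor:conditions} then yields $\alpha_x\mu_{R(m,x)}(r)=\alpha_y\mu_{R(p,y)}(\Lambda_k(r))$. Summing over $r$ and over $x$ gives the claimed equality of the two row sums.

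\textbf{Expected obstacle.} The delicate point is that Lemma~\ref{lemma:cycle} is stated for at most two successors. I would argue from the structure of a greedy discipline, as in Definition~\ref{def: greedy matching discipline}: $R(m,x)\subseteq\{m-e_u: u\in\Delta(m,x)\}$ when $x\in\Gamma(m)$, and $R(m,x)=\{m+e_x\}$ otherwise. The identity $\sigma_k(\Delta(m,x))=\Delta(p,y)$ then makes the correspondence a genuine bijection, so no two successors of $m$ are sent to the same successor of $p$ and no successor of $p$ is missed. One must also ensure that, if $R(p,y)$ contained states outside $\Lambda_k(R(m,x))$, condition \eqref{eq:equiv_mu} would force their mass to be zero. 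This follows because the masses $\mu_{R(p,y)}(\Lambda_k(r))$ sum to $\sum_r\mu_{R(m,x)}(r)=1$.
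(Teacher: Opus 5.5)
Your proposal is correct and follows essentially the same route as the paper. Like the paper, you decompose the one-step probability into $\sum_{l}\alpha_l(\cdots)$, reindex the arrivals through the bijection $l\mapsto\sigma_k(l)$, and match the terms using Lemma~\ref{lemma:cycle}, Corollary~\ref{cor:conditions} and $\alpha_l=\alpha$. You also supply details that the paper's short proof leaves implicit: the $\alpha_\emptyset$ self-loop term, the fact that the blocks are orbits so block membership is rotation-invariant, and the mass argument that rules out successors of $p$ with no counterpart among the successors of $m$.
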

\begin{proof}
To establish ordinary lumpability, we consider two macro-states defined by equation \eqref{part1}: $\mathcal{B}_i$ and $\mathcal{B}_j$, as well as two arbitrary states $m$ and $p$ within $\mathcal{B}_i$.
Without loss of generality, suppose there is a non-zero probability of transitioning from state $m$ to $\mathcal{B}_j$. 
By construction:
$$Pr(m,{\cal{B}}_j) = \sum_{l=0}^{n-1} \alpha_l \sum_{u \in {\cal{B}}_j } Pr({\rm arrival ~of~ letter} ~{\it l}~ {\rm provokes~
a~ transition~ from} ~{\it m}~ {\rm to} ~{\it u}).$$
From Lemma \ref{lemma:cycle} and Corollary \ref{cor:conditions}, we know that for any $k$, we have:
$$\begin{array} {l} 
\sum_{u \in {\cal{B}}_j } Pr({\rm arrival ~of~ letter} ~{\it l}~ {\rm provokes~
a~ transition~ from} ~{\it m}~ {\rm to} ~{\it u})  = \\
\\
~~~~ \sum_{u \in {\cal{B}}_j } Pr({\rm arrival ~of~ letter} ~{\it \sigma_k (l) }~ {\rm provokes~
a~ transition~ from} ~{\it \Lambda_k (m) }~ {\rm to} ~{\it u}). 
\end{array} $$
Since $\sigma_k$ is a one-to-one mapping and $\alpha_l=\alpha$ for all $l$, we conclude that the lumpability condition holds for the cycle graph.
\end{proof}

Let us now generalize the previous result to an arbitrary graph. For rings of odd size, the rotations and the reflections 
are the automorphisms of the compatibility graph. 
This is the tool we need for a more general compatibility graph. 

\section{An Arbitrary Graph with a Non-trivial Automorphism Group}

We begin by recalling the definition of the automorphism group of a graph.

\begin{defin}[Graph automorphism]
    Consider an arbitrary graph $G=(V,E)$ and let $\sigma$ be a permutation of the set $V$. 
    We say that $\sigma$ is a graph automorphism if, for all nodes $u$ and $v$, $(u,v)\in E$ implies $(\sigma(u),\sigma(v))\in E$.
\end{defin}
Assume that $G$ has a non-trivial automorphism group, denoted $\mathrm{Aut}[G]$, and that $V$ contains $n$ nodes, none of which carries a self-loop.
Let $\mathcal{A}$ represent a subgroup of $\mathrm{Aut[G]}$.
For example, in the previous section, we considered the automorphism group of a ring with $n$ nodes, which is the dihedral group, and we mainly restricted our attention to the subgroup of the rotations. 

Let $\mathcal{S}$ be the state space of the Markov chain associated with the compatibility graph $G$, the arrival probability distribution $\alpha_i$, and a greedy matching discipline.
For a permutation $\sigma\in\mathcal{A}$, define $\Lambda_\sigma$ as the corresponding permutation of the states $m=(m_1,\cdots,m_n)$ in $\mathcal{S}$ by
$$\Lambda_\sigma(m_1,\cdots,m_n)=(m_{\sigma^{-1}(1)},\cdots,m_{\sigma^{-1}(n)})$$
Since $\mathcal{A}$ is a subgroup of $\mathrm{Aut}[G]$, the inverce permutation $\sigma^{-1}$ also belongs to $\mathcal{A}$.
This setup leads to the following property, which generalizes previous propositions about rotations.

\begin{property}
    Let $m$ and $p$ be two states in $\mathcal{S}$. 
    For every automorphism $\sigma$ in $\mathcal{A}$, we have
    $$\Lambda_\sigma(m+p)=\Lambda_\sigma(m)+\Lambda_\sigma(p), \text{ and } \Lambda_\sigma(e_x)=e_{\sigma(x)}.$$
\end{property}
With this extension to arbitrary graphs, we introduce a natural partition of the state space as follows:
$$\mathcal{B}_m=\{p\in\mathcal{S}|\exists\sigma\in\mathcal{A},p=\Lambda_\sigma(m)\}.$$
Thus, each partition $\mathcal{B}_m$ consists of states obtained by applying any permutation from $\mathcal{A}$ to the description of $m$.
Since explicitly constructing these partitions for the transition matrix is complex, we instead aim to prove a stronger property that ensures lumpability.

\begin{lemma}\label{lemma:arbitrary}
    Let $m$ be an arbitrary state of $\mathcal{S}$, let $x$ be an item type of $V$, and let $\sigma$ be a permutation of the subgroup $\mathcal{A}$.
    Let $p=\Lambda_\sigma(m)$ and $y=\sigma(x)$.
    Under an arbitrary greedy matching discipline, if the action of $x$ on $m$ leads to a state $r$, then the action of $y$ on $p$ leads to a state $s$.
    Furthermore, $r$ and $s$ are in the same macro-state.
    
\end{lemma}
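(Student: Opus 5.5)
The plan is to follow the three-case structure of Lemma~\ref{lemma:cycle}, replacing the ring-specific index computations with the single identity
\[
p[\sigma(v)] = \Lambda_\sigma(m)[\sigma(v)] = m[\sigma^{-1}(\sigma(v))] = m[v] \qquad \forall v\in V,
\]
which follows directly from the definition of $\Lambda_\sigma$. From this identity I first derive the key structural fact
\[
\operatorname{supp}(p)=\sigma(\operatorname{supp}(m)),\qquad \Delta(p,y)=\sigma\bigl(\Delta(m,x)\bigr).
\]
The first equality is immediate. For the second, I need $\Gamma(\sigma(x))=\sigma(\Gamma(x))$, that is, $(u,v)\in E \iff (\sigma(u),\sigma(v))\in E$. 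The definition of automorphism given in the paper states only the forward implication. The reverse implication comes from the fact that $\mathcal{A}$ is a group, so $\sigma^{-1}\in\mathcal{A}$ is also an automorphism and can be applied to $(\sigma(u),\sigma(v))$. Once these are in hand, $v\in\Delta(m,x)$ if and only if $m[v]>0$ and $v\in\Gamma(x)$, which holds if and only if $p[\sigma(v)]>0$ and $\sigma(v)\in\Gamma(y)$.

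I then split into two cases. In the first case, $\Delta(m,x)=\emptyset$. Then $\Delta(p,y)=\emptyset$ as well, so $x$ joins $m$ and $y$ joins $p$, giving $r=m+e_x$ and $s=p+e_y$. By the Property stated above, $\Lambda_\sigma(r)=\Lambda_\sigma(m)+e_{\sigma(x)}=s$.

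In the second case, $\Delta(m,x)\neq\emptyset$, and there is no longer any bound on its size. By Definition~\ref{def: greedy matching discipline}, every $r\in R(m,x)$ satisfies $|r|=|m|-1$ and $x\notin r$, so $r=m-e_z$ for some $z\in\Delta(m,x)$. Given such an $r$, I set $s=p-e_{\sigma(z)}$. Since $\sigma(z)\in\Delta(p,y)$, the arrival of $y$ in $p$ can delete an item of type $\sigma(z)$, so $s$ is a legitimate greedy successor. Linearity then gives $\Lambda_\sigma(r)=\Lambda_\sigma(m)-e_{\sigma(z)}=s$, so $r$ and $s$ lie in the same macro-state $\mathcal{B}_r$. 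This construction defines a bijection $z\mapsto\sigma(z)$ between the candidate successors of $(m,x)$ and those of $(p,y)$. Together with the analogue of \eqref{eq:equiv_mu}, this bijection is exactly what the later lumpability theorem will need.

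The main obstacle is a subtlety in the definition of a discipline. Definition~\ref{def: matching discipline} lets $R(m,x)$ be an arbitrary nonempty subset of the greedy candidates, so a priori $R(p,y)$ need not contain $\sigma(z)$ merely because $R(m,x)$ contains $z$ (Example~\ref{exa:priority} shows this failure for global priority). I would read the lemma, as the paper does for Lemma~\ref{lemma:cycle}, as a purely structural claim about the possible greedy deletions, namely all states $m-e_z$ with $z\in\Delta(m,x)$. Under that reading the statement holds by the bijection above. Whether a transition actually has positive probability is handled separately by the invariance assumption \eqref{eq:equiv_mu} on $\mu$, as in Corollary~\ref{cor:conditions}. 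I would state this interpretation explicitly at the start of the proof.
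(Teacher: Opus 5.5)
Your proposal is correct and follows essentially the same two-case argument as the paper: you split on whether $\Delta(m,x)$ is empty, then use linearity to get $\Lambda_\sigma(m+e_x)=p+e_y$ in the first case and $\Lambda_\sigma(m-e_z)=p-e_{\sigma(z)}$ in the second. Your additions only make explicit what the paper's proof assumes implicitly: you derive $\Delta(p,y)=\sigma(\Delta(m,x))$ using $\sigma^{-1}\in\mathcal{A}$, and you read the lemma structurally as a statement about all candidate deletions $m-e_z$ with $z\in\Delta(m,x)$.
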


To proceed with the proof, we consider two cases based on the action of $x$ on $m$. Note that, unlike in Lemma \ref{lemma:cycle}, a node of $G$ may have an arbitrary number of neighbours.
\begin{enumerate}
    \item Case 1: $\Gamma(x)\cap \operatorname{supp}(m)=\emptyset$. 
    Here, the item $x$ is added to $m$ with probability $1$, yielding $m+e_x$ as the new state.
    \item Case 2: $W=\Delta(m,x)$ is not empty.
    Let $q=|W|$ and let $z_1,\cdots z_q$ be the types of $W$.
\end{enumerate}
\begin{proof}
The proof structure is analogous to that of Lemma \ref{lemma:cycle}.

\begin{itemize}
    \item Case 1: For an arbitrary compatibility graph without self-loops, the condition $\Delta(m,x)=\emptyset$ implies that for all $u\in\Gamma(x)$, we have $m[u]=0$.
    Let $p=\Lambda_\sigma(m)$ for some $\sigma\in\mathcal{A}$, and let $y=\sigma(x)$.
    By definition, we have for all $u\in\Gamma(x)$,
$$p[y]=p[\sigma(u)]=m[\sigma^{-1}(\sigma(u))]=m[u]=0.$$
    Since $\sigma$ is an automorphism, $(x,u)\in E$ if and only if $(\sigma(x),\sigma(u))\in E$.
    Therefore, for all $v\in\Gamma(\sigma(x))$, we have $p[v]=0$.
    This implies that the arrival of item $y$ in $p$ leads to state $p+e_y$ with probability 1.
    By the greedy matching discipline, neither $x$ nor $y$ can be discarded on arrival, ensuring that
$$\Lambda_\sigma(r)=\Lambda_\sigma(m+e_x)=\Lambda_\sigma(m)+e_{\sigma(x)}=p+e_y=s$$
    Therefore, state $s$ belongs to the partition $\mathcal{B}_r$, which completes Case 1.
    \item Case 2: In this scenario, the intersection $W=\Gamma(x)\cap\operatorname{supp}(m)$ is not empty, which means that there exists at least one type $z\in\Gamma(x)$ such that $m[z]\neq 0$.
    Since $\sigma$ is an automorphism of the graph $G$, it maps $z$ to $\sigma(z)$, ensuring that $\sigma(z)$ is a neighbour of $\sigma(x)$.
    Thus, we have:
    \begin{itemize}
        \item $z\in\Gamma(x)$ implies $(x,z)\in E$
        \item By the automorphism property, $(\sigma(x),\sigma(z))\in E$, thus $$\sigma(z)\in\Gamma(\sigma(x))\cap\Lambda_\sigma(m),$$ confirming that $\sigma(z)$ is a neighbour of $\sigma(x)$
        \item Since $z\in m$, we also have $m[z]\neq 0$, which implies:
        $$p[\sigma(z)]=m[\sigma^{-1}(\sigma(z))]=m[z]\neq 0$$
    \end{itemize}
    Because of the automorphism property, we also have    $$|\Gamma(\sigma(x))\cap\Lambda_\sigma(m)|=q\quad \text{for all }\sigma\in\mathcal{A}.$$
    The arrival of item $x$ in state $m$ or item $\sigma(x)$ in state $p=\Lambda_\sigma(m)$ results in a transition where one of the items in $\Gamma(x)\cap m$ is removed.
    
    To capture this transition, let $r_1,r_2,\cdots,r_q$ be the possible resulting states when item $x$ arrives at state $m$ and one of the $q$ neighbors $z_1,z_2,\cdots,z_q$ is removed.
    Correspondingly, let $s_1,s_2,\cdots, s_q$ be the states obtained by removing items $\sigma(z_1),\sigma(z_2),\cdots, \sigma(z_q)$ upon the arrival of $\sigma(x)$ in $p$.

    To show that each $r_i$ and $s_i$ belong to the same macro-state, we verify that: $s_i=\Lambda_\sigma(r_i)$ for each $i=1,\cdots,q$, ensuring that $r_i$ and $s_i$ are in the same partition $\mathcal{B}_{r_i}$.
    We define each $r_i$ by the formula: $r_i=m-e_{z_i}$, where $e_{z_i}$ represents the removal of item $z_i$ from state m.
    Applying $\Lambda_\sigma$ to $r_i$, we obtain:
    $$\Lambda_\sigma(r_i)=\Lambda_\sigma(m-e_{z_i})=\Lambda_\sigma(m)-e_{\sigma(z_i)}=p-e_{\sigma(z_i)}=s_i$$
    Thus, $\Lambda_\sigma(r_i)$ corresponds to the state obtained by removing $\sigma(z_i)$ from $p$, which matches $s_i$.
    Since $\sigma(z_i)$ is a neighbor of $\sigma(x)$, each $s_i$ belongs to the states resulting from the arrival of $\sigma(x)$ at $p$.

    \end{itemize}
\end{proof}

\begin{cor}
    \label{corollary:arbitrary}
    Adopt the setting and the notation of Lemma \ref{lemma:arbitrary}. Assume that
    \begin{enumerate}
        \item the arrival probabilities are invariant under the automorphism subgroup $\mathcal{A}$:
        \begin{equation}
            \alpha_x=\alpha_{\sigma(x)},\quad\forall x\in V,\forall \sigma\in\mathcal{A};
            \label{eq:arbitrary_alpha}
        \end{equation}
        \item the discipline is invariant under the automorphism subgroup $\mathcal{A}$:
        \begin{equation}
            \mu_{R(m,x)}(r)=\mu_{R(p,y)}(s),\quad \forall m,r\in\mathcal{S},\forall x\in V,\forall\sigma\in\mathcal{A}.
            \label{eq:arbitrary_mu}
        \end{equation}
        Then the two corresponding transitions have the same probability:
        \begin{equation}
            \alpha_x\mu_{R(m,x)}(r_i)=\alpha_y\mu_{R(p,y)}(s_i),\label{eq:remark}
        \end{equation}
    \end{enumerate}
\end{cor}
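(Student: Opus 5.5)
The claim is an immediate combination of Lemma~\ref{lemma:arbitrary} with the two invariance hypotheses. I will fix $m$, $x$, $\sigma\in\mathcal{A}$, and set $p=\Lambda_\sigma(m)$ and $y=\sigma(x)$. Lemma~\ref{lemma:arbitrary} gives the correspondence between successors, namely $s_i=\Lambda_\sigma(r_i)$, and the identity~\eqref{eq:remark} is then obtained by multiplying two separately justified equalities.

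First I will treat the arrival factor. Since $y=\sigma(x)$ and $\sigma\in\mathcal{A}$, hypothesis~\eqref{eq:arbitrary_alpha} gives $\alpha_x=\alpha_y$ directly.

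Second I will treat the discipline factor, splitting along the two cases of Lemma~\ref{lemma:arbitrary}.
\begin{itemize}
\item In Case~1 ($\Delta(m,x)=\emptyset$), the lemma shows that $\Delta(p,y)=\emptyset$ as well. The unique successors are $r=m+e_x$ and $s=p+e_y=\Lambda_\sigma(r)$, and both are reached with probability $1$, so the equality holds trivially.
\item In Case~2, $R(m,x)=\{r_1,\dots,r_q\}$ with $r_i=m-e_{z_i}$. The lemma shows that $R(p,y)=\{s_1,\dots,s_q\}$ with $s_i=p-e_{\sigma(z_i)}=\Lambda_\sigma(r_i)$, and this pairing is a bijection because $\sigma$ is injective on $\Delta(m,x)$. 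Hypothesis~\eqref{eq:arbitrary_mu}, read with $p=\Lambda_\sigma(m)$, $y=\sigma(x)$ and $s=\Lambda_\sigma(r)$ as in the notation of the lemma, then yields $\mu_{R(m,x)}(r_i)=\mu_{R(p,y)}(s_i)$ for each $i$.
\end{itemize}
Multiplying the arrival equality and the discipline equality gives $\alpha_x\mu_{R(m,x)}(r_i)=\alpha_y\mu_{R(p,y)}(s_i)$.

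The only delicate point is the interpretation of~\eqref{eq:arbitrary_mu}. As written, its quantifiers leave $p$, $y$ and $s$ implicit, so I must make explicit that they denote $\Lambda_\sigma(m)$, $\sigma(x)$ and $\Lambda_\sigma(r)$. This reading is well posed only because Lemma~\ref{lemma:arbitrary} guarantees $\Lambda_\sigma(r)\in R(p,y)$ whenever $r\in R(m,x)$. I will also note that the identity holds even when some $\mu_{R(m,x)}(r_i)$ is zero, since the supports are matched through the bijection. As in Example~\ref{exa:rnd}, I could add that RANDOM satisfies the hypothesis because $p[\sigma(z)]=m[z]$, but this remark is not needed for the proof.
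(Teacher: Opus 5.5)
Your proposal is correct and follows the same route as the paper, which gives no separate proof and treats the corollary as immediate: the correspondence $s_i=\Lambda_\sigma(r_i)$ from Lemma~\ref{lemma:arbitrary}, together with $\alpha_x=\alpha_{\sigma(x)}$ and the discipline invariance~\eqref{eq:arbitrary_mu}, yields the identity once the two equalities are multiplied. Your explicit reading of~\eqref{eq:arbitrary_mu} with $p=\Lambda_\sigma(m)$, $y=\sigma(x)$ and $s=\Lambda_\sigma(r)$ is exactly the intended interpretation, and it is what makes the statement essentially immediate.
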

    
When the transition probabilities $\mu_{R(m,x)}(r_i)$ and $\mu_{R(p,y)}(s_i)$ are consistent except for a possible mismatch in arrival probabilities.
For example, consider a ring graph with uniform transition probabilities.
In this case, the transition probabilities depend only on the structure of the graph and the matching discipline.
If $\sigma$ is an automorphism of graph $G$, the transition probabilities for the arrival of $x$ under state $m$ and the arrival of $y=\sigma(x)$ under state $\Lambda_{\sigma}(m)$ are identical because the neighbor set $\Gamma(x)$ maps exactly to $\Gamma(y)$ under $\sigma$, and the matching probabilities are uniform.

It is obvious that the automorphism-based transition consistency constraint in \eqref{eq:remark} is indeed an infinite constraint.
Since $V$ is finite, the constraint does not technically involve an infinite set of nodes, but it is considered infinite in the sense that it applies universally across all possible instances generated by $\mathcal{A}$.
In practical terms, the infinite constraint could be hard to verify or apply directly, so often it's simplified by leveraging graph symmetry or orbit decomposition to reduce the condition to a finite, more manageable set of constraints.

    \begin{theo}\label{theo:arbitrary}
    Let $G$ be an arbitrary compatibility graph without self-loops, let $\mathcal{A}$ be a subgroup of $\mathrm{Aut}[G]$ and consider a greedy matching discipline.
        If the arrival probabilities satisfy the constraints in \eqref{eq:arbitrary_alpha} and if the discipline satisfies the conditions in \eqref{eq:arbitrary_mu}, then the Markov chain $\mathcal{M}(G,(\alpha_i)_{i=1,\cdots,n},GD)$ is strongly aggregable for the partition defined in \eqref{part1}, based on the subgroup $\mathcal{A}$ of $\mathrm{Aut}[G]$.
        
    \end{theo}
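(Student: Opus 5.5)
We verify the ordinary lumpability condition of Definition~\ref{lump} for the orbit partition $\mathcal{B}_m=\{\Lambda_\sigma(m):\sigma\in\mathcal{A}\}$. The argument follows the proof of Theorem~\ref{theo:cycle}, with rotations replaced by elements of $\mathcal{A}$.

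First, we check that the blocks $\mathcal{B}_m$ form a partition. Since $\mathcal{A}$ is a group and $\sigma\mapsto\Lambda_\sigma$ respects composition, we have $\Lambda_{\sigma\tau}=\Lambda_\sigma\circ\Lambda_\tau$ and $\Lambda_{\sigma^{-1}}=\Lambda_\sigma^{-1}$. Hence the relation ``$p=\Lambda_\sigma(m)$ for some $\sigma\in\mathcal{A}$'' is an equivalence relation, and the $\mathcal{B}_m$ are its classes.

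Next, fix two blocks $\mathcal{B}_i,\mathcal{B}_j$, a state $m\in\mathcal{B}_i$, and $p=\Lambda_\sigma(m)$ with $\sigma\in\mathcal{A}$. Decomposing according to the arriving item,
\[
Pr(m,\mathcal{B}_j)=\alpha_\emptyset\,\mathbf{1}_{m\in\mathcal{B}_j}+\sum_{x\in V}\alpha_x\sum_{r\in\mathcal{B}_j}\mu_{R(m,x)}(r),
\]
where, when $x\notin\Gamma(m)$, we take $R(m,x)=\{m+e_x\}$ with mass one. The $\alpha_\emptyset$ term is the same for $m$ and $p$ because blocks are $\Lambda_\sigma$-invariant. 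For each $x$, Lemma~\ref{lemma:arbitrary} gives the map $r\mapsto\Lambda_\sigma(r)$, which sends $R(m,x)$ into $R(p,\sigma(x))$. This map is injective, and it is onto because the inverse automorphism $\sigma^{-1}$ gives the reverse correspondence. Since $\Lambda_\sigma(r)$ lies in the same block as $r$, it maps $R(m,x)\cap\mathcal{B}_j$ bijectively onto $R(p,\sigma(x))\cap\mathcal{B}_j$. Corollary~\ref{corollary:arbitrary}, which uses \eqref{eq:arbitrary_alpha} and \eqref{eq:arbitrary_mu}, then gives equality of the corresponding terms:
\[
\alpha_x\sum_{r\in\mathcal{B}_j}\mu_{R(m,x)}(r)=\alpha_{\sigma(x)}\sum_{s\in\mathcal{B}_j}\mu_{R(p,\sigma(x))}(s).
\]
Finally, we sum over $x\in V$. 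Because $\sigma$ is a bijection of $V$, reindexing by $y=\sigma(x)$ turns the right-hand side into the full sum defining $Pr(p,\mathcal{B}_j)$. Hence $Pr(m,\mathcal{B}_j)=Pr(p,\mathcal{B}_j)$, which is lumpability. Strong aggregability follows by the standard equivalence with ordinary lumpability.

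The step most likely to need care is the correspondence of successor sets in the middle step. Lemma~\ref{lemma:arbitrary} is stated one way only (``if $x$ leads to $r$ then $y$ leads to $s$''). The bijection needs the converse, obtained by applying the lemma to $\sigma^{-1}$, together with the fact that $\mu$ is defined on the same combinatorial set of deletions. We also have to make sure that Case~1 and Case~2 of the lemma cannot get mixed up between $m$ and $p$. This holds because $|\Delta(p,\sigma(x))|=|\Delta(m,x)|$, so an arrival that matches from $m$ also matches from $p$ and vice versa. The remaining point is interpreting \eqref{eq:arbitrary_mu} with $p=\Lambda_\sigma(m)$, $y=\sigma(x)$, $s=\Lambda_\sigma(r)$, which is exactly the reading used in Corollary~\ref{corollary:arbitrary}.
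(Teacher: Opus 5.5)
Your proof is correct and follows the same route as the paper. You decompose $Pr(m,\mathcal{B}_j)$ by arriving type, match the term for $x$ at $m$ with the term for $\sigma(x)$ at $\Lambda_\sigma(m)$ via Lemma~\ref{lemma:arbitrary} and Corollary~\ref{corollary:arbitrary}, and reindex using that $\sigma$ is a bijection of $V$; your extra checks (that the orbits form a partition, the $\alpha_\emptyset$ no-arrival term, and surjectivity of the successor correspondence via $\sigma^{-1}$) fill in details the paper leaves implicit.
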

\begin{proof}
The proof follows that of Theorem \ref{theo:cycle}. 
Using the same notation, we consider
    $$Pr(m,{\cal{B}}_j) = \sum_{l\in V}\alpha_l \sum_{u \in {\cal{B}}_j } Pr({\rm arrival ~of~ letter} ~{\it l}~ {\rm provokes~
a~ transition~ from} ~{\it m}~ {\rm to} ~{\it u}).$$
By Lemma \ref{lemma:arbitrary} and Corollary \ref{corollary:arbitrary}, we find that the summations satisfy:
$$\begin{array} {l} 
\sum_{u \in {\cal{B}}_j } Pr({\rm arrival ~of~ letter} ~{\it l}~ {\rm provokes~
a~ transition~ from} ~{\it m}~ {\rm to} ~{\it u})  = \\
\\
\sum_{u \in {\cal{B}}_j } Pr({\rm arrival ~of~ letter} ~{\it \sigma(l) }~ {\rm provokes~
a~ transition~ from} ~{\it \Lambda_\sigma (m) }~ {\rm to} ~{\it u}). 
\end{array} $$

Clearly, if the arrival probabilities $\alpha_i$ satisfy the conditions in \eqref{eq:arbitrary_alpha}, it follows that $Pr(m,B_j)=Pr(p,B_j)$.
Thus, the lumpability condition is satisfied for an arbitrary graph $G$, completing the proof.
\end{proof}

\section{Extension to Non-greedy Matching Discipline}
In the previous sections, we only considered the greedy matching disciplines.
We now consider two families of non-greedy disciplines.
The analysis is simpler than above, because for each new discipline, we only need to check the two conditions of Corollary \ref{corollary:arbitrary} and then to apply Theorem \ref{theo:arbitrary}.

\subsection{Rejection Non-greedy Matching Disciplines}
We first consider a class of non-greedy matching disciplines which preserve the relation between the independent sets of the compatibility graph and the states of the system.
Our objective is to analyze the lumpability conditions for an arbitrary compatibility graph $G$ under this class of matching disciplines. 
We refer to this class of non-greedy matching disciplines as Rejection matching disciplines, which are defined as follows.
\begin{defin}[Rejection discipline]
    Let $m$ denote an arbitrary state, and let $x$ be an item arriving into the system.
    Under the non-greedy matching discipline Rejection, the system evolves as follows:
    \begin{itemize}
    \item If $\Delta(m,x)=\emptyset$: The item $x$ will be added to $m$ with probability $1$, yielding $m+e_x$ as the new state;
    \item If $\Delta(m,x)\neq\emptyset$: The item $x$ interacts with the state $m$ according to the following probabilistic rules:
    \begin{itemize}
        \item Rejection probability $\eta\in[0,1]$: The item $x$ is rejected and leaves the system even though it matches, and the state remains as $m$.
        \item Matching probability $(1-\eta)$: The item $x$ successfully matches with a compatibility item with type $u\in\Delta(m,x)$.
        Both $x$ and the matched item are removed from the system, resulting in a new state $m-e_u$.
        \end{itemize}
    \end{itemize}
\end{defin}
\begin{lemma}
    \label{lemma:rejection}
    Let $m\in\mathcal{S}$ be a state, and let $x\in V$ be an arriving item.
    Consider an arbitrary permutation $\sigma$ in subgroup $\mathcal{A}$, defining $p=\Lambda_\sigma(m)$ and $y=\sigma(x)$.
    Under the Rejection matching discipline, if the action of $x$ on $m$ leads to a state $r$, then the action of $y$ on $p$ leads to a state $s$.
    Furthermore, $r$ and $s$ are in the same macro-state.
\end{lemma}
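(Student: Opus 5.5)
The plan is to follow the proof of Lemma \ref{lemma:arbitrary}, reusing its two structural facts. First, $\Lambda_\sigma$ is additive and satisfies $\Lambda_\sigma(e_x)=e_{\sigma(x)}$. Second, since $\sigma$ is an automorphism, $p[\sigma(u)]=m[u]$ for every $u$, and $\sigma(\Gamma(x))=\Gamma(\sigma(x))$. Together these give
\[
\Delta(p,y)=\sigma\bigl(\Delta(m,x)\bigr), \qquad |\Delta(p,y)|=|\Delta(m,x)|.
\]
In particular, $\Delta(m,x)=\emptyset$ if and only if $\Delta(p,y)=\emptyset$. So the case split for $x$ on $m$ is exactly mirrored for $y$ on $p$.

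The Rejection discipline has one more type of outcome than a greedy discipline, so I will split into three cases.

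\textbf{Case 1: $\Delta(m,x)=\emptyset$.} The argument is word for word Case 1 of Lemma \ref{lemma:arbitrary}. We have $r=m+e_x$ and $s=p+e_y=\Lambda_\sigma(r)$.

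\textbf{Case 2: $\Delta(m,x)\neq\emptyset$ and $x$ is rejected.} Then $r=m$. Because $\Delta(p,y)\neq\emptyset$ as well, rejection of $y$ is an admissible outcome at $p$, with the same parameter $\eta$, and it gives $s=p=\Lambda_\sigma(m)=\Lambda_\sigma(r)$.

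\textbf{Case 3: $\Delta(m,x)=\{z_1,\dots,z_q\}\neq\emptyset$ and $x$ is matched.} Then $r=r_i=m-e_{z_i}$ for some $i$. Since $\sigma(z_i)\in\Delta(p,y)$, the state $s_i=p-e_{\sigma(z_i)}$ is a possible outcome of the matching branch at $p$. Additivity gives $s_i=\Lambda_\sigma(r_i)$.

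In every case $s=\Lambda_\sigma(r)$, so $s\in\mathcal{B}_r$.

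I expect the only step needing care to be Case 2. There one has to make sure the rejection branch is triggered at $p$ exactly when it is triggered at $m$, which is the equivalence $\Delta(m,x)\neq\emptyset\iff\Delta(p,y)\neq\emptyset$ established above. The rejected state is fixed, so it corresponds to itself under $\Lambda_\sigma$.

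I will also note that the correspondence $r\mapsto\Lambda_\sigma(r)$ is a bijection between the outcome sets $R(m,x)$ and $R(p,y)$, with inverse given by $\sigma^{-1}\in\mathcal{A}$. This is what allows Corollary \ref{corollary:arbitrary} and Theorem \ref{theo:arbitrary} to be applied afterwards. Those results additionally need the matching distribution within the $(1-\eta)$ branch to satisfy \eqref{eq:arbitrary_mu}, but that condition is not part of the present structural lemma.
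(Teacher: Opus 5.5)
Your proof is correct and follows essentially the same route as the paper. The paper's Case 2 splits into exactly your rejection and matching sub-cases, and it relies (through Lemma \ref{lemma:arbitrary}) on the same facts $\Delta(p,y)=\sigma(\Delta(m,x))$ and $\Lambda_\sigma(m-e_u)=p-e_{\sigma(u)}$. The paper additionally records the probabilities $\eta$ and $(1-\eta)/|W|$, assuming uniform choice in $W$, whereas you keep the lemma structural and state $s=\Lambda_\sigma(r)$ explicitly in the rejection case, which the paper leaves implicit.
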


We build on the analysis in Lemma \ref{lemma:arbitrary}, still considering two cases based on the behavior of item $x$ on state $m$.
Instead of repeating the proofs, we emphasize the changes in the transition probabilities induced by the non-greedy matching discipline, and the effect of the consistency conditions in equation \eqref{eq:remark} which is crucial for the lumpability condition.
\begin{proof}
For any state $m$ and an arriving item $x$, we have the following two cases:
\begin{itemize}
    \item Case 1: $\Delta(m,x)=\emptyset$. 
    This case is unaffected by rejection since no matching occurs. The item $x$ is added to $m$ with transition probability $1$, yielding $r=m+e_x$ as the new state.
    The transition probabilities remain invariant under automorphisms, preserving lumpability:
    $$\mu_{R(m,x)}(r)=\mu_{R(p,y)}(s)=1.$$
    Thus, the automorphism-based transition consistency condition in equation \eqref{eq:remark} simplifies to the consistency of arrival probabilities in equation \eqref{eq:arbitrary_alpha}, namely for all $\sigma\in\mathcal{A}$
    \begin{equation}
        \label{eq:arrival consistency}\alpha_x=\alpha_{\sigma(x)}=\alpha_y.
    \end{equation}
    \item Case 2: $W=\Delta(m,x)$ is not empty. 
    The item $x$ may attempt to match with an item $u\in W$, and one of the two outcomes occurs:
    \begin{itemize}
        \item Rejection: The system remains in state $m$, as item $x$ is rejected and no items are removed.
        The transition probability in this case is $\mu_{R(m,x)}(m)=\eta$.
        Considering that the rejection probability $\eta$ is a constant, which means that it is independent of the state.
        \item Match: The item matches with an item $u\in W$, and the transition leads to $r=m-e_{u}$, where $u$ is chosen uniformly from $W$, giving:
        $$\mu_{R(m,x)}(r)=(1-\eta)/|W|,$$ where $|W|$ is the size of compatibility set.
        As $\eta$ is a constant, it remains invariant under automorphisms, ensuring that transition probabilities remain consistent across symmetric states. Thus, the lumpability condition is preserved.
    \end{itemize}
\end{itemize}    
\end{proof}

\subsection{Generalizing the Rejection Probability}
Next, we extend the definition of rejection probability $\eta$ by allowing it to depend on both the system state and the arriving item.
We categorize this generalized rejection probability into three cases:
\begin{defin}
    The rejection probability function $\eta$ can be a function of the arriving item $x$, the current system state $m$, or both:
    \begin{enumerate}
    \item $\eta=h(x)$, where the probability of rejection is determined solely by the type of arriving item
    \item $\eta=h(m)$, where the probability of rejection depends on the current state.
    \item $\eta=h(m,x)$, where both factors influence the probability of rejection.
\end{enumerate}
\end{defin}
The intuitive example for item-dependent rejection probability $\eta=h(x)$ is that high-priority items are rarely rejected, while low-priority items are often rejected.
The state-dependent rejection probability function $\eta=h(m)$ is very useful in queueing or congestion models, where arriving items have a lower probability of being rejected for quick matching, regardless of the item type, as the system queue approaches capacity. 
The final case $\eta=h(m,x)$ is the most general case, and we can illustrate it with the following example:
    $$h(m, x)= \begin{cases}0.1, & x \in \text { high-priority }, \\ 0.7, & x \in \text { low-priority and }|m| \leq 2, \\ 0.3, & x \in \text { low-priority and }|m|>2 .\end{cases}$$
    Intuitively, high-priority items are always accepted regardless of queue size, while low-priority items are prone to be rejected when queues are small but accepted when queues are large.

As for the lumpability condition for arbitrary graphs in the above three cases, the system needs to satisfy the following two general conditions.

\begin{theo}
    \label{theo:symmetry invariance}
    Let $m\in\mathcal{S}$ be a state, and let $x\in V$ be an arriving item.
    For a rejection probability function $h(m,x)$, it satisfies the symmetry invariance condition under the automorphism subgroup $\mathcal{A}$ if and only if
    $$h(m,x)=h(\Lambda_\sigma(m),\sigma(x)),\quad \forall x\in\mathcal{V},m\in S,\sigma\in\mathcal{A}.$$
\end{theo}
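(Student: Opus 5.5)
The plan is to read ``symmetry invariance condition'' as the discipline-invariance condition \eqref{eq:arbitrary_mu} of Corollary \ref{corollary:arbitrary}, specialised to the Rejection discipline whose rejection probability is $\eta=h(m,x)$. In that discipline the match, when it occurs, is made with a type chosen uniformly in $W=\Delta(m,x)$, exactly as in Lemma \ref{lemma:rejection}. The statement then reduces to comparing, outcome by outcome, the distributions $\mu_{R(m,x)}$ and $\mu_{R(p,y)}$, where $p=\Lambda_\sigma(m)$ and $y=\sigma(x)$. Once that equivalence is established, Theorem \ref{theo:arbitrary} gives strong aggregability under the additional constraint \eqref{eq:arbitrary_alpha}. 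This is the intended use of the result.

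First I would list the outcomes and their probabilities, reusing the case split of Lemma \ref{lemma:arbitrary}.
\begin{itemize}
\item Case 1, $W=\emptyset$. The only successor is $m+e_x$, with probability $1$. By Lemma \ref{lemma:arbitrary}, $\Delta(p,y)=\emptyset$ as well, and $\Lambda_\sigma(m+e_x)=p+e_y$. So \eqref{eq:arbitrary_mu} holds whatever $h$ is.
\item Case 2, $W\neq\emptyset$. The successors are $m$, with probability $h(m,x)$, and the states $m-e_{z_i}$, each with probability $(1-h(m,x))/|W|$. These successors are pairwise distinct, because $|m-e_{z_i}|=|m|-1$ and distinct types $z_i$ give distinct vectors. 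Hence no probabilities merge.
\end{itemize}
By the automorphism argument of Lemma \ref{lemma:arbitrary}, $\Delta(p,y)=\sigma(W)$ and $|\sigma(W)|=|W|$. Moreover $\Lambda_\sigma(m)=p$ and $\Lambda_\sigma(m-e_{z_i})=p-e_{\sigma(z_i)}$.

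For sufficiency, assume $h(m,x)=h(\Lambda_\sigma(m),\sigma(x))$. Then the rejection probabilities coincide, $h(m,x)=h(p,y)$. The matching probabilities also coincide, $(1-h(m,x))/|W|=(1-h(p,y))/|\sigma(W)|$. Together with Case 1, this verifies \eqref{eq:arbitrary_mu} for every $m$, $x$ and $\sigma\in\mathcal{A}$.

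For necessity, assume \eqref{eq:arbitrary_mu} holds. Take any $m$ and $x$ with $W\neq\emptyset$. The outcome $r=m$ corresponds to $s=\Lambda_\sigma(m)=p$, so the condition gives $h(m,x)=\mu_{R(m,x)}(m)=\mu_{R(p,y)}(p)=h(p,y)$. That is the claimed identity.

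The delicate step is this ``only if'' direction on pairs $(m,x)$ with $\Delta(m,x)=\emptyset$. There $h$ never enters the dynamics, so nothing forces invariance. I would handle this by noting that $h(m,x)$ is only meaningful when $\Delta(m,x)\neq\emptyset$, a property preserved by $\sigma$. The equivalence is then stated on that set, or equivalently $h$ is fixed conventionally, e.g.\ to $0$, off it, which makes it automatically invariant. The two special cases $h(x)$ and $h(m)$ follow immediately: they require respectively $h$ constant on $\mathcal{A}$-orbits of $V$, and $h(m)=h(\Lambda_\sigma(m))$.
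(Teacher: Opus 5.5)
The paper gives no proof of this statement. It effectively serves as the \emph{definition} of the symmetry invariance condition, so the equivalence is tautological. The only argument the paper attaches is the remark that follows it: under this identity the rejection outcome has the same probability on both sides, $\mu_{R(m,x)}(m)=h(m,x)=h(p,y)=\mu_{R(p,y)}(p)$. The matched outcomes are deliberately deferred to Theorem \ref{theo:transition consistency}, because in this subsection the selection rule inside $\Delta(m,x)$ is no longer assumed uniform; uniform and degree-proportional selection appear there only as examples.

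You take a different route: you give the statement real content by identifying the condition with the full discipline invariance \eqref{eq:arbitrary_mu}, and your argument is correct under your hypotheses. Your necessity step reads $h(m,x)$ off the successor $r=m$, the only successor of cardinality $|m|$. It works for any selection rule and is just the paper's remark run backwards. Your sufficiency step, however, genuinely relies on uniform selection. In the paper's more general setting your reading would make the ``if'' direction false. Take $h\equiv\eta<1$ constant, hence trivially invariant, together with the global-priority selection of Example \ref{exa:priority}; this violates \eqref{eq:arbitrary_mu} for the rotation subgroup. That is exactly why the paper splits the two conditions.

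What your route buys is an equivalence with an actual dynamical property. It also yields the correct observation, not addressed in the paper, that $h$ never enters the chain on pairs with $\Delta(m,x)=\emptyset$. So any such ``only if'' can hold only on $\{\Delta(m,x)\neq\emptyset\}$ or up to a convention. Note too that your necessity concerns the pointwise condition \eqref{eq:arbitrary_mu}, not lumpability itself. Lumpability only constrains block sums such as $\sum_{x:\Delta(m,x)\neq\emptyset}\alpha_x h(m,x)$, so $h$-invariance is sufficient for it but not necessary.
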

This condition ensures that the rejection probability remains consistent across items in the same partitions under the automorphism subgroup $\mathcal{A}$, preserving the required equivalence between transitions.
Furthermore, when the rejection probability function depends solely on the arrival term, i.e., $h(x)$, or the state of the system, i.e., $h(m)$, the symmetry invariance condition can be specified as the following two lemmas.
\begin{lemma}
    If the rejection probability function depends only on the arriving item $x$, then the function satisfies the symmetry invariance condition under the automorphism subgroup $\mathcal{A}$ if and only if
    $$h(x)=h(\sigma(x)),\quad \forall x\in\mathcal{V},\sigma\in\mathcal{A}.$$
\end{lemma}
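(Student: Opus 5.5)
The plan is to derive the statement as a direct specialisation of Theorem~\ref{theo:symmetry invariance}. Regard a rejection function depending only on the arriving item as a two-argument function $\tilde h(m,x)=h(x)$, constant in $m$. By Theorem~\ref{theo:symmetry invariance}, $\tilde h$ satisfies the symmetry invariance condition under $\mathcal{A}$ if and only if
\[
\tilde h(m,x)=\tilde h(\Lambda_\sigma(m),\sigma(x))\qquad \forall x\in V,\ m\in\mathcal{S},\ \sigma\in\mathcal{A}.
\]
Substituting the definition of $\tilde h$, this reads $h(x)=h(\sigma(x))$ for all $x$, all $\sigma\in\mathcal{A}$, and all $m\in\mathcal{S}$. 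The quantifier over $m$ is vacuous because $\mathcal{S}$ is non-empty (it contains the empty state). Hence the condition reduces exactly to the one in the statement, which gives both directions at once.

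To make the equivalence transparent, I will also write the two implications separately. If $h(x)=h(\sigma(x))$ for all $x$ and $\sigma$, then for any $m$ we get $\tilde h(\Lambda_\sigma(m),\sigma(x))=h(\sigma(x))=h(x)=\tilde h(m,x)$. Conversely, if the invariance holds for all $m$, then choosing $m=\emptyset$ (or any state) yields $h(x)=h(\sigma(x))$.

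I will then remark why this is the right condition for lumpability, following Lemma~\ref{lemma:rejection}. With $\eta=h(x)$, a matching transition has probability $\alpha_x(1-h(x))\mu$ and a rejection has probability $\alpha_x h(x)$. Corollary~\ref{corollary:arbitrary} needs these to equal the corresponding quantities for $y=\sigma(x)$ on $p=\Lambda_\sigma(m)$. Lemma~\ref{lemma:arbitrary} already guarantees that $|\Delta(m,x)|=|\Delta(p,y)|$, so the uniform choice among compatible items matches. The only remaining requirement is $h(x)=h(\sigma(x))$, i.e.\ $h$ must be constant on the $\mathcal{A}$-orbits of $V$, like the arrival probabilities in \eqref{eq:arbitrary_alpha}.

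There is no real obstacle here; the only care needed is to note that the universal quantifier over $m$ is harmless, which is why the reverse direction holds.
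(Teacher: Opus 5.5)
Your proposal is correct and takes the same route as the paper, which gives no separate proof and obtains this lemma by specialising the general symmetry invariance condition $h(m,x)=h(\Lambda_\sigma(m),\sigma(x))$ to a function that does not depend on $m$. Your remark that the quantifier over $m$ is harmless because $\mathcal{S}$ is non-empty (it contains the empty state) is exactly what makes both directions of the equivalence hold.
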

\begin{lemma}
    If the rejection probability function depends only on the current state $m$, then the function satisfies the symmetry invariance condition under the automorphism subgroup $\mathcal{A}$ if and only if
    $$h(m)=h(\Lambda_\sigma(m)),\quad \forall m\in\mathcal{S},\sigma\in\mathcal{A}.$$
\end{lemma}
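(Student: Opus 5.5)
The statement to prove is the last lemma: for a state-only rejection function $h(m)$, the symmetry invariance condition of Theorem \ref{theo:symmetry invariance} holds if and only if $h(m)=h(\Lambda_\sigma(m))$ for all $m\in\mathcal{S}$ and $\sigma\in\mathcal{A}$. The plan is to specialise that general condition to the case where $h(m,x)$ does not depend on its second argument, and to check the two implications separately. I take the general condition $h(m,x)=h(\Lambda_\sigma(m),\sigma(x))$ for all $x,m,\sigma$ as the definition of symmetry invariance, as Theorem \ref{theo:symmetry invariance} states.

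For the direction ``invariance implies the condition'', write the state-only function as $\tilde h(m,x)=h(m)$ for every $x\in V$. Assuming symmetry invariance, fix any $m\in\mathcal{S}$ and $\sigma\in\mathcal{A}$, and pick any $x\in V$; this is possible because $V$ is nonempty. Then
\[
h(m)=\tilde h(m,x)=\tilde h(\Lambda_\sigma(m),\sigma(x))=h(\Lambda_\sigma(m)).
\]

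For the converse, assume $h(m)=h(\Lambda_\sigma(m))$ for all $m$ and $\sigma$. For any $x$, $m$ and $\sigma$ we get
\[
\tilde h(\Lambda_\sigma(m),\sigma(x))=h(\Lambda_\sigma(m))=h(m)=\tilde h(m,x),
\]
which is exactly the general condition. The only points to check are well-definedness facts. First, $\Lambda_\sigma(m)\in\mathcal{S}$: under Rejection the states are exactly those with independent support, and $\sigma$ maps independent sets to independent sets because it is an automorphism. Second, $\sigma(x)\in V$, which is immediate.

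There is no real obstacle; the argument is routine. The one step needing care is the remark that the quantification over $x$ becomes vacuous in one direction and needs $V\neq\emptyset$ in the other. I would also add a short remark linking the result back to lumpability. With $\eta=h(m)$, the rejection transition $m\to m$ has probability $\alpha_x h(m)$. The match transitions $m\to m-e_u$ have probability $\alpha_x(1-h(m))\,\nu(u)$, where $\nu$ is the invariant selection rule (uniform on $W$, as in Lemma \ref{lemma:rejection}). Under the lemma's condition these probabilities coincide with their images $\alpha_{\sigma(x)}h(\Lambda_\sigma(m))$ and $\alpha_{\sigma(x)}(1-h(\Lambda_\sigma(m)))\,\nu(\sigma(u))$. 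The hypotheses of Corollary \ref{corollary:arbitrary} therefore hold, and the argument of Theorem \ref{theo:arbitrary} gives strong aggregability.
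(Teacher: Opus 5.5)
Your proof is correct and is the same argument the paper relies on. The paper gives no separate proof and treats the lemma as the special case of the general condition $h(m,x)=h(\Lambda_\sigma(m),\sigma(x))$ where $h$ does not depend on $x$, which is exactly the specialization you carry out in both directions. One nit concerns only your closing lumpability remark, not the lemma: the transition probabilities coincide with their images only if you also assume arrival invariance, $\alpha_x=\alpha_{\sigma(x)}$.
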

Subsequently, when the arriving item $x$ is rejected, the state transition probability of remaining the original state is as follows:
\begin{equation*}
   \mu_{R(m,x)}(m)=h(m,x),\quad \mu_{R(p,y)}(p)=h(\Lambda_\sigma(m),\sigma(x)).     \end{equation*}
When the rejection probability function $h(m,x)$ satisfies the symmetry invariance condition in Theorem \ref{theo:symmetry invariance}, the state transition probabilities of both are equal. 
Further, under the arrival consistency in equation \eqref{eq:arrival consistency}, the lumpability condition holds in this case.
However, this condition does not control what happens when the arrival item is accepted and matched, leading to the next condition of the theorem.

\begin{theo}
\label{theo:transition consistency}
    Let $m\in\mathcal{S}$ be a state, and let $x\in V$ be an arriving item.
    Consider an arbitrary permutation $\sigma$ in subgroup $\mathcal{A}$, defining $p=\Lambda_\sigma(m)$ and $y=\sigma(x)$.
    If matching occurs (with probability $1-\eta$), the transition probability satisfies consistency under automorphisms if and only if the probability of selecting the item $u$ from $\Delta(m,x)$ depends only on an automorphism-invariant property of the system.
    Formally, for any $u$, the following equation must hold:
    $$\mu_{R(m,x)}(m-e_u)=\mu_{R(p,y)}(p-e_v), \quad \text{where }v=\sigma(u).$$
\end{theo}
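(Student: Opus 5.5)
The statement is essentially a characterisation, so I will prove it as an ``if and only if'' by reducing it to the consistency condition \eqref{eq:remark} of Corollary \ref{corollary:arbitrary}. The reduction applied to the matching branch of the Rejection discipline, with the rejection branch already handled by Theorem \ref{theo:symmetry invariance}.

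First I will make the matching branch explicit. For $\Delta(m,x)\neq\emptyset$, write the successor probability as
\[
\mu_{R(m,x)}(m-e_u)=(1-h(m,x))\,\pi_{m,x}(u),\qquad u\in\Delta(m,x),
\]
where $\pi_{m,x}$ is the selection distribution on $\Delta(m,x)$. By Lemma \ref{lemma:rejection} (case 2 of Lemma \ref{lemma:arbitrary}), $\sigma$ maps $\Delta(m,x)$ bijectively onto $\Delta(p,y)$ with $p=\Lambda_\sigma(m)$, $y=\sigma(x)$, and $\Lambda_\sigma(m-e_u)=p-e_{\sigma(u)}$. So the successors correspond one to one, and the statement reduces to comparing $\mu_{R(m,x)}(m-e_u)$ with $\mu_{R(p,y)}(p-e_{\sigma(u)})$.

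For sufficiency, assume $\mu_{R(m,x)}(m-e_u)=\mu_{R(p,y)}(p-e_{\sigma(u)})$ for all $u$. Then, together with \eqref{eq:arbitrary_alpha}, every matching transition satisfies \eqref{eq:remark}. Together with the rejection branch, the same holds for the self-loop transitions, using $h(m,x)=h(p,y)$ from Theorem \ref{theo:symmetry invariance}. The argument of Theorem \ref{theo:arbitrary} then gives $Pr(m,\mathcal{B}_j)=Pr(p,\mathcal{B}_j)$.

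The phrase ``depends only on an automorphism-invariant property'' I will formalise as follows: $\pi_{m,x}(u)=F(m,x,u)$ with $F(\Lambda_\sigma(m),\sigma(x),\sigma(u))=F(m,x,u)$. The RANDOM rule is an example, since $p[\sigma(u)]=m[u]$, and so is the uniform rule over $W$, since $|\Delta(p,y)|=|\Delta(m,x)|$. Under this formalisation, the displayed equality is exactly this invariance, multiplied by the invariant factor $1-h$. So the two formulations of the condition are equivalent, and both directions of that equivalence are a one-line substitution.

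For necessity, I read ``consistency under automorphisms'' as the requirement that corresponding transitions have equal probability, as in \eqref{eq:remark}. Since $\alpha_x=\alpha_y>0$, cancelling the arrival probabilities gives the displayed equation directly. This step is the main obstacle. Ordinary lumpability (Definition \ref{lump}) only constrains sums over a macro-state, and distinct $u,u'\in\Delta(m,x)$ may give successors $m-e_u$ and $m-e_{u'}$ in the same block, for example when the stabiliser of $m$ in $\mathcal{A}$ swaps $u$ and $u'$. In that case lumpability alone cannot force the individual equalities. I will therefore state explicitly that necessity refers to the transition-wise consistency \eqref{eq:remark}, not to lumpability itself. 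I will note with this example that only the sum over the stabiliser orbit of $u$ is forced by lumpability.
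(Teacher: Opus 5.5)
The paper gives no proof of this statement. It is stated and immediately followed by the two lemmas showing that uniform selection over $\Delta(m,x)$ and degree-proportional selection satisfy the displayed identity. The paper therefore effectively takes that identity as the definition of ``consistency under automorphisms'', which makes the ``if and only if'' definitional.

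Your proposal makes this explicit and is correct. Three ingredients reduce both directions to a substitution:
\begin{itemize}
\item the bijection $\sigma:\Delta(m,x)\to\Delta(p,y)$, where the paper's Lemma~\ref{lemma:arbitrary} only gives one inclusion and the reverse follows by applying the same argument to $\sigma^{-1}\in\mathcal{A}$;
\item the identity $\Lambda_\sigma(m-e_u)=p-e_{\sigma(u)}$;
\item the formalisation $F(\Lambda_\sigma(m),\sigma(x),\sigma(u))=F(m,x,u)$.
\end{itemize}
You are also right that necessity cannot be extracted from Definition~\ref{lump}. Two degenerate points should be stated. First, cancelling $\alpha_x$ needs $\alpha_x>0$; you can avoid this entirely by taking the paper's ``Formally'' line as the meaning of consistency. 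Second, recovering invariance of the selection distribution $\pi_{m,x}$ from the $\mu$-identity needs $1-h(m,x)\neq 0$.

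Your closing remark, however, is inaccurate: lumpability does not force the sum over the stabiliser orbit of $u$. A block can contain $m-e_w$ for $w$ outside $\mathrm{Stab}_{\mathcal{A}}(m)\cdot u$. The paper's own Example~\ref{exa:priority} shows this: $C_7$ with rotations, $m=e_1+e_3$, $x=2$.
\begin{itemize}
\item The stabiliser of $m$ is trivial, yet $m-e_1=e_3$ and $m-e_3=e_1$ lie in the same block.
\item With $\alpha_i=\alpha$, the probabilities of entering that block from $m$ and from $\Lambda_4(m)$ coincide, both equal to $3\alpha$.
\item The priority rule nevertheless violates the individual identity there, which here coincides with the orbit sum.
\end{itemize}
Using RANDOM everywhere except at the states of $\mathcal{B}_m$, where the priority rule is applied, gives a lumpable chain in which even the orbit sums are not preserved.

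What lumpability actually constrains is only
$\sum_{l\in V}\alpha_l\sum_{r\in\mathcal{B}_j}\Pr(\text{arrival of } l \text{ moves } m \text{ to } r)$. This sum is aggregated over all arrival types and all successors in a block, so it can be strictly coarser than orbit sums. This reinforces your decision to read necessity as transition-wise consistency. Your main argument is unaffected; only the closing remark needs correcting.
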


To ensure that Theorem \ref{theo:transition consistency} holds, we now examine two specific matching rules that satisfy the required conditions.
The following two lemmas establish sufficient conditions under which the transition probability consistency is maintained.

\begin{lemma}
    Suppose that matching is chosen uniformly among the types of $\Delta(m,x)$, i.e., each item is selected with probability:
    $$\mu_{R(m,x)}(m-e_u)=\frac{(1-\eta)}{|\Delta(m,x)|}.$$
    Then, the transition probability satisfies the conditions required in Theorem \ref{theo:transition consistency}.
\end{lemma}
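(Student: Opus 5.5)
The plan is to verify the identity required in Theorem~\ref{theo:transition consistency} directly. Its two sides differ only in $|\Delta(m,x)|$ versus $|\Delta(p,y)|$ and in the rejection probability attached to $(m,x)$ versus $(p,y)$. Fix a state $m$, an arriving item $x$ with $\Delta(m,x)\neq\emptyset$, a permutation $\sigma\in\mathcal{A}$, and set $p=\Lambda_\sigma(m)$, $y=\sigma(x)$. The goal is to show that for every $u\in\Delta(m,x)$, with $v=\sigma(u)$, the identity $\mu_{R(m,x)}(m-e_u)=\mu_{R(p,y)}(p-e_v)$ holds.

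\textbf{Step 1: $\sigma$ induces a bijection $\Delta(m,x)\to\Delta(p,y)$.} Take $u\in\Delta(m,x)$. Then $(x,u)\in E$ and $m[u]>0$. Since $\sigma$ is an automorphism, $(\sigma(x),\sigma(u))\in E$. Moreover $p[\sigma(u)]=m[\sigma^{-1}(\sigma(u))]=m[u]>0$, so $\sigma(u)\in\Delta(p,y)$. Applying the same argument to $\sigma^{-1}\in\mathcal{A}$ with $m=\Lambda_{\sigma^{-1}}(p)$ gives the reverse inclusion. Hence $\sigma$ restricts to a bijection between the two sets and $|\Delta(m,x)|=|\Delta(p,y)|$. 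This is exactly the argument of Case~2 of Lemma~\ref{lemma:arbitrary}, which I would cite rather than repeat. The same step shows that $p-e_v=\Lambda_\sigma(m-e_u)$, so the target states correspond.

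\textbf{Step 2: the factors $(1-\eta)$ agree.} If $\eta$ is a constant, as in the Rejection discipline of Lemma~\ref{lemma:rejection}, this is immediate. If $\eta=h(m,x)$ is one of the generalized rejection functions, I would invoke the symmetry invariance condition of Theorem~\ref{theo:symmetry invariance}, $h(m,x)=h(\Lambda_\sigma(m),\sigma(x))$. This is the only real subtlety, since the lemma as written treats $\eta$ as a given quantity. I would state explicitly that $\eta$ is either constant or satisfies that invariance, which is the standing assumption of the subsection.

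\textbf{Step 3: conclude.} Combining the two steps gives
\[
\mu_{R(m,x)}(m-e_u)=\frac{1-h(m,x)}{|\Delta(m,x)|}=\frac{1-h(p,y)}{|\Delta(p,y)|}=\mu_{R(p,y)}(p-e_v).
\]
For completeness I would note that the rejection transition is also consistent: $\mu_{R(m,x)}(m)=h(m,x)=h(p,y)=\mu_{R(p,y)}(p)$ and $p=\Lambda_\sigma(m)$. Therefore the full distribution $\mu_{R(p,y)}$ is the image of $\mu_{R(m,x)}$ under $\Lambda_\sigma$, which is the condition~\eqref{eq:arbitrary_mu} required by Corollary~\ref{corollary:arbitrary}.
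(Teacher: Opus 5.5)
Your proof is correct and follows the reasoning the paper intends. The paper gives no separate proof of this lemma; it relies on the equality $|\Delta(\Lambda_\sigma(m),\sigma(x))|=|\Delta(m,x)|$ from Case~2 of Lemma~\ref{lemma:arbitrary}, together with $\eta$ being constant (Lemma~\ref{lemma:rejection}) or symmetry-invariant (Theorem~\ref{theo:symmetry invariance}). Your version is more careful on two points: you prove the reverse inclusion via $\sigma^{-1}$, which the paper only asserts, and you state explicitly that a state-dependent $\eta=h(m,x)$ must satisfy $h(m,x)=h(\Lambda_\sigma(m),\sigma(x))$.
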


\begin{lemma}
    Suppose that matching is chosen based on the degree function $\deg_G(u)=|\Gamma(u)|$, where the probability of selecting an item $u\in\Delta(m,x)$ is proportional to its degree:
    $$\mu_{R(m,x)}(m-e_u)=\frac{\deg_G(u)}{\sum_{k\in\Delta(m,x)}\deg_G(k)}\times (1-\eta).$$
    The degree function naturally satisfies invariance under arbitrary automorphisms, i.e., $\deg_G(u)=\deg_G(\sigma(u))$.
    Therefore, the transition probability satisfies the conditions required in Theorem \ref{theo:transition consistency}.
\end{lemma}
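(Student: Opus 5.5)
The statement to prove is the last lemma: under degree-proportional matching, the consistency condition of Theorem \ref{theo:transition consistency} holds. The plan is to check directly that $\mu_{R(m,x)}(m-e_u)=\mu_{R(p,y)}(p-e_{\sigma(u)})$ for every $u\in\Delta(m,x)$, with $p=\Lambda_\sigma(m)$, $y=\sigma(x)$ and $\sigma\in\mathcal{A}$. I will compare the numerator and the denominator of the defining fraction separately, and check that the factor $(1-\eta)$ is the same on both sides.

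\textbf{Step 1: degree invariance.} Since $\sigma$ is a bijection of $V$ and an automorphism, it maps $\Gamma(u)$ bijectively onto $\Gamma(\sigma(u))$. The definition of automorphism in the paper is stated only as one implication, $(u,v)\in E \Rightarrow (\sigma(u),\sigma(v))\in E$. For finite graphs, injectivity of $\sigma$ on $E$ upgrades this to an equivalence. Alternatively, applying the implication to $\sigma^{-1}\in\mathcal{A}$ gives the converse. Hence $\deg_G(u)=\deg_G(\sigma(u))$.

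\textbf{Step 2: the denominator.} I will show that $\sigma$ restricts to a bijection $\Delta(m,x)\to\Delta(p,y)$. This was essentially established in Case 2 of Lemma \ref{lemma:arbitrary}. First, $u\in\Gamma(x)$ if and only if $\sigma(u)\in\Gamma(\sigma(x))$. Second, $p[\sigma(u)]=m[u]$, so $u\in\operatorname{supp}(m)$ if and only if $\sigma(u)\in\operatorname{supp}(p)$. Reindexing the sum through this bijection and using Step 1 gives
\[
\sum_{k\in\Delta(m,x)}\deg_G(k)=\sum_{k'\in\Delta(p,y)}\deg_G(k').
\]
By Step 1 the numerators $\deg_G(u)$ and $\deg_G(\sigma(u))$ also agree.

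\textbf{Step 3: the factor $(1-\eta)$.} This factor is equal on both sides when $\eta$ is a constant. If $\eta=h(m,x)$ is one of the generalized rejection functions, equality requires the symmetry invariance condition of Theorem \ref{theo:symmetry invariance}, and I will add that hypothesis explicitly for this case. With these pieces the required equality follows.

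The main subtlety is the hidden dependence on $\eta$, which must be invariant; the graph-theoretic part is routine.
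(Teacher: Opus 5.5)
Your proposal is correct and follows the paper's argument, which consists only of the observation that $\deg_G(u)=\deg_G(\sigma(u))$ under automorphisms; you also make explicit the bijection $\sigma:\Delta(m,x)\to\Delta(p,y)$ (from Case~2 of Lemma~\ref{lemma:arbitrary}) that the paper leaves implicit and that is needed to match the denominators. Your caveat on the factor $(1-\eta)$ agrees with the paper's structure, which imposes the invariance of $\eta$ as a separate hypothesis (Theorem~\ref{theo:symmetry invariance}) and combines the two conditions in Corollary~\ref{corollary:rejection}.
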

Finally, we can derive the following corollary \ref{corollary:rejection} based on the above theorem related to the strong aggregation of the associated Markov chains under the non-greedy rejection matching discipline.
\begin{cor}
    \label{corollary:rejection}
    Consider a stochastic matching system over an arbitrary graph $G=(\mathcal{V},\mathcal{E})$ with a non-greedy rejection matching discipline $R$.
    If the rejection probability function $\eta$ satisfies the symmetry invariance condition in Theorem \ref{theo:symmetry invariance} and transition probability satisfies the consistency condition in Theorem \ref{theo:transition consistency} when matching occurs, the Markov chain $\mathcal{M}(G,(\alpha_i)_{i=1,\cdots,n},R)$ is strong aggregable based on the subgroup $\mathcal{A}$ for the partition defined in \eqref{part1}.
\end{cor}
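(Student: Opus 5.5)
The plan is to reduce Corollary~\ref{corollary:rejection} to Theorem~\ref{theo:arbitrary}. As the opening of this section suggests, it suffices to check that the Rejection discipline satisfies the two hypotheses of Corollary~\ref{corollary:arbitrary}: invariance of the arrival probabilities \eqref{eq:arbitrary_alpha}, which we take as part of the setting (the arrival consistency \eqref{eq:arrival consistency}), and invariance of the discipline \eqref{eq:arbitrary_mu}. Once these hold, the counting argument in the proof of Theorem~\ref{theo:arbitrary} carries over verbatim. That argument only uses two facts: there is a bijection $r\mapsto\Lambda_\sigma(r)$ between the successors of $m$ under arrival $x$ and the successors of $p=\Lambda_\sigma(m)$ under arrival $\sigma(x)$, and this bijection preserves probabilities. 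Greediness is used nowhere else.

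First I establish the structural correspondence, as in Lemma~\ref{lemma:rejection}. Fix $m$, $x$, $\sigma\in\mathcal{A}$, and set $p=\Lambda_\sigma(m)$, $y=\sigma(x)$. By the Case~1 argument of Lemma~\ref{lemma:arbitrary}, $\Delta(p,y)=\sigma(\Delta(m,x))$, so $\Delta(m,x)=\emptyset$ if and only if $\Delta(p,y)=\emptyset$. In the empty case the unique successor is $m+e_x$, which maps to $p+e_y$ with probability $1$ on both sides. In the non-empty case the successor set is $\{m\}\cup\{m-e_u: u\in\Delta(m,x)\}$, and $\Lambda_\sigma$ maps it bijectively onto $\{p\}\cup\{p-e_{\sigma(u)}: u\in\Delta(m,x)\}$. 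This uses $\Lambda_\sigma(m)=p$ and the linearity property $\Lambda_\sigma(m-e_u)=p-e_{\sigma(u)}$.

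Second, I compare probabilities along this bijection. For the self-loop (rejection) transition, the symmetry invariance condition of Theorem~\ref{theo:symmetry invariance} gives $\mu_{R(m,x)}(m)=h(m,x)=h(p,y)=\mu_{R(p,y)}(p)$. For a matching transition, the consistency condition of Theorem~\ref{theo:transition consistency} gives $\mu_{R(m,x)}(m-e_u)=\mu_{R(p,y)}(p-e_{\sigma(u)})$. Multiplying by $\alpha_x=\alpha_y$ yields \eqref{eq:remark} for every successor.

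The step that needs care, and which I expect to be the main obstacle, is the final summation. Since $m$ and $\Lambda_\sigma(m)$ lie in the same block, each block $\mathcal{B}_j$ is invariant under $\Lambda_\sigma$; that is, $\Lambda_\sigma$ permutes the states inside $\mathcal{B}_j$. Hence
\[
\sum_{u\in\mathcal{B}_j}\Pr(x:m\to u)=\sum_{u\in\mathcal{B}_j}\Pr(\sigma(x):p\to \Lambda_\sigma(u)).
\]
This holds even when two distinct successors of $m$ coincide or when the rejection loop $m$ itself lies in $\mathcal{B}_j$, because the bijection acts on states rather than on the labels of outcomes. Summing over $x$, reindexing the sum by $x\mapsto\sigma(x)$ (a bijection of $V$), and using $\alpha_x=\alpha_{\sigma(x)}$ gives $\Pr(m,\mathcal{B}_j)=\Pr(p,\mathcal{B}_j)$. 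This is the condition of Definition~\ref{lump}, and the corollary follows.
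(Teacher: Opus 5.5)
Your proposal is correct and follows the same route as the paper. You establish the structural correspondence of Lemma~\ref{lemma:rejection}, use the symmetry-invariance and matching-consistency conditions to equate transition probabilities along $r\mapsto\Lambda_\sigma(r)$, and then apply the counting argument of Theorem~\ref{theo:arbitrary}. Two of your additions tighten steps the paper only sketches: you argue explicitly that each block is an $\mathcal{A}$-orbit and is therefore permuted by $\Lambda_\sigma$, and you state the arrival invariance $\alpha_x=\alpha_{\sigma(x)}$ as a hypothesis, which the corollary leaves implicit even though it is needed.
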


\subsection{Threshold-Driven Non-greedy Matching Disiciplines}
Under the Rejection matching discipline, the states of the system remain independent sets due to the rejection mechanism.
We now extend our focus to a class of threshold-driven non-greedy matching disciplines which allow compatible items to coexist within the same states.
Therefore, the states are no longer independent sets anymore.

To proceed, we formally define the threshold-based non-greedy matching disciplines as follows:
\begin{defin}[Threshold-driven discipline]
    Let $m\in\mathcal{S}$ be a state, and let $x\in V$ be an arriving item.
    Given a threshold $\theta\in\mathbb{N}$ and a threshold function $h$, the system evolves according to the following rules:
    \begin{itemize}
        \item If $\Delta(m,x)=\emptyset$: the item $x$ is added to the state $m$ with probability $1$, yielding $m+e_x$ as the new state;
        \item If $\Delta(m,x)\neq \emptyset$: the interaction between $x$ and $m$ is determined by the value of the threshold function $h(m,x)$:
        \begin{itemize}
            \item If $h(m,x)>\theta$: the item $x$ matches with a compatible item in $\Delta(m,x)$, chosen according to uniform probability distribution.
            After matching, the state transitions to $m-e_u$, where $u\in\Delta(m,x)$ represents the matched item.
            \item If $h(m,x)\leq \theta$: the item $x$ is added to the state $m$ despite matching, resulting in a new state $m+e_x$.
        \end{itemize}
    \end{itemize}
\end{defin}
It is worth emphasizing that this framework enables adaptive matching policies, allowing the system to postpone decisions when immediate matching may not be optimal.

\begin{rem}
\label{remark:threshold function}
    The threshold function $h(m,x)$ determines the behavior of the matching discipline. 
    Below, we introduce three specific rules for $h(m,x)$:
\begin{enumerate}
    \item Global Queue Threshold ($h(m,x)=|m|)$: the threshold is based on the global queue size of the system.
    \item Local Density Threshold ($h(m,x)=|\Delta(m,x)|$): the threshold is determined by the number of compatible items in $\Gamma(x)$.
    \item Maximum Queue Length Threshold ($h(m,x)=\max_{u\in\Gamma(x)}m[u]$): the threshold depends on the maximum queue length of the compatible items in $\Gamma(x)$.
\end{enumerate}
\end{rem}
We will analyze each rule separately using lemmas to investigate the lumpability condition under these three specific rules. 

\begin{lemma}
    Let $m\in\mathcal{S}$ be a state, and let $x\in V$ be an arriving item.
    Consider an arbitrary permutation $\sigma$ in subgroup $\mathcal{A}$, defining $p=\Lambda_\sigma(m)$ and $y=\sigma(x)$.
    Under the threshold function $h(m,x)=|m|$, 
    if the action of $x$ on $m$ leads to a state $r$, then the action of $y$ on $p$ leads to a state $s$ with the same probability, and $r$ and $s$ are in the same macro-state.
\end{lemma}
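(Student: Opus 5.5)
The plan is to mirror the proof of Lemma~\ref{lemma:arbitrary}: split into cases according to the action of $x$ on $m$, and in each case exhibit $s=\Lambda_\sigma(r)$ and check that the transition probabilities coincide.

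First I would record three invariance facts under $\sigma\in\mathcal{A}$.
\begin{enumerate}
\item $|\Lambda_\sigma(m)|=|m|$, since $\Lambda_\sigma$ only permutes the coordinates of $m$. Hence $h(p,y)=|p|=|m|=h(m,x)$.
\item $\Delta(p,y)=\sigma(\Delta(m,x))$. This comes from Case~2 of the proof of Lemma~\ref{lemma:arbitrary}: $\sigma$ maps $\Gamma(x)$ bijectively onto $\Gamma(\sigma(x))$, and $p[\sigma(u)]=m[u]$. In particular $|\Delta(p,y)|=|\Delta(m,x)|$, and $\Delta(p,y)=\emptyset$ if and only if $\Delta(m,x)=\emptyset$.
\item $\Lambda_\sigma(m\pm e_u)=p\pm e_{\sigma(u)}$, by the linearity property stated for $\Lambda_\sigma$.
\end{enumerate}

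Next come the three cases.
\begin{itemize}
\item If $\Delta(m,x)=\emptyset$, then $\Delta(p,y)=\emptyset$. Both items are added with probability $1$, so $r=m+e_x$, $s=p+e_y$, and $s=\Lambda_\sigma(r)$.
\item If $\Delta(m,x)\neq\emptyset$ and $|m|\le\theta$, then by fact 1 we also have $|p|\le\theta$. Both items are added despite compatibility, and we again get $s=p+e_y=\Lambda_\sigma(r)$ with probability $1$.
\item If $|m|>\theta$, then $|p|>\theta$ and matching occurs in both systems. The arrival of $x$ leads to $r_u=m-e_u$ with probability $1/|\Delta(m,x)|$ for each $u\in\Delta(m,x)$. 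Using fact 2, we associate the state $s_{\sigma(u)}=p-e_{\sigma(u)}=\Lambda_\sigma(r_u)$, which is reached with probability $1/|\Delta(p,y)|=1/|\Delta(m,x)|$. This correspondence $u\mapsto\sigma(u)$ is a bijection between the possible outcomes, so the distributions agree term by term.
\end{itemize}

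This establishes condition~\eqref{eq:arbitrary_mu} for this discipline. Combined with \eqref{eq:arbitrary_alpha}, it gives equality of the transition probabilities $\alpha_x\mu_{R(m,x)}(r)=\alpha_y\mu_{R(p,y)}(s)$, exactly as in Corollary~\ref{corollary:arbitrary}.

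The main obstacle is conceptual rather than computational: here $m$ need not be an independent set, and the discipline is not greedy, so Lemma~\ref{lemma:arbitrary} cannot be quoted directly. I would check that its proof of facts 2 and 3 never used greediness or independence. It relied only on $\sigma$ being an automorphism and on how $\Lambda_\sigma$ is defined, so those arguments carry over verbatim. One further point needs care: the case split must be decided identically for $(m,x)$ and $(p,y)$. That is precisely what the invariance of $h(m,x)=|m|$ guarantees.
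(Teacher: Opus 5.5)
Your proposal is correct and follows the paper's proof essentially step for step. It uses the same three cases ($\Delta(m,x)=\emptyset$; $\Delta(m,x)\neq\emptyset$ with $|m|\le\theta$; $\Delta(m,x)\neq\emptyset$ with $|m|>\theta$), relies on $|\Lambda_\sigma(m)|=|m|$ so that the case is decided identically for $(m,x)$ and $(p,y)$, and uses $|\Delta(p,y)|=|\Delta(m,x)|$ to get equal uniform matching probabilities. Your version is slightly tidier: it states $s=\Lambda_\sigma(r)=p+e_y$ explicitly in the non-matching case, where the paper writes $m+e_u$ for what should be $m+e_x$, and it notes that the automorphism arguments of Lemma~\ref{lemma:arbitrary} never used greediness.
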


\begin{proof}
The evolution of state $m$ upon the arrival of item $x$ in an arbitrary graph $G$ can be divided into three cases:
    \begin{itemize}
    \item Case 1:$\Delta(m,x)=\emptyset$. With probability $1$, the item $x$ is added to state $m$, and the state becomes $m+e_x$.
    By automorphism $\sigma$, if $p=\Lambda_\sigma(m)$ and $y=\sigma(x)$, it follows that:
    $$\mu_{R(m, x)}\left(m+e_x\right)=\mu_{R(p, y)}\left(p+e_y\right)=1.$$
    \item Case 2:$\Delta(m,x)\neq\emptyset$ and $|m|>\theta$.
    In this scenario, the set $\Delta(m,x)\neq\emptyset$ means that there exists at least one item $z\in\Gamma(x)$ such that $m[z]\neq 0$.
    By the automorphism property, we have
    $$(\sigma(x),\sigma(z))\in E\text{ and } \sigma(z)\in\Gamma(\sigma(x))\cap\Lambda_\sigma(m).$$
    This implies $\Gamma(y)\cap p\neq 0$.
    Consequently,
    the arrival of item $y$ also satisfies the condition:
    $$|p|=|\Lambda_\sigma(m)|=|m|>\theta.$$
    Since automorphisms preserve the structure of the graph and the number of items in each state.
    
    Combining the proof in Lemma \ref{lemma:arbitrary}, we have the following equations based on the automorphism property.
    $$|\Delta(m,x)|=q\text{ and }|\Delta(\Lambda_\sigma(m),\sigma(x))|=q,\quad\text{ for all }\sigma\in\mathcal{A}.$$
    Since the matched item is chosen uniformly from $\Delta(m,x)$ and $\Delta(p,y)$, we have $$\mu_{R(m,x)}(m-e_u)=\mu_{R(p,y)}(p-e_v),\quad v=\sigma(u).$$ 
    Therefore, we could satisfy the consistency conditions mentioned in equation \eqref{eq:remark} when we have $\alpha_x=\alpha_y$.
    \item Case 3: $\Delta(m,x)\neq\emptyset$ and $|m|\leq \theta$. 
    Same as in Case 2, we satisfy the following condition for item $y$:
    $$|p|=|\Lambda_\sigma(m)|=|m|\leq \theta.$$
    In this case, the item $x$ will be added to the state $m$, and the state becomes $m+e_x$. 
    Because of the property of automorphisms, we have
    $$\mu_{R(m,x)}(m+e_u)=\mu_{R(p,y)}(p+e_v)=1,\quad v=\sigma(u).$$
\end{itemize}
\end{proof}

Based on the proof of the above lemma, we will simplify the proof of the next two lemmas to avoid repetition.

\begin{lemma}
    Let $m\in\mathcal{S}$ be a state, and let $x\in V$ be an arriving item.
    Consider an arbitrary permutation $\sigma$ in subgroup $\mathcal{A}$, defining $p=\Lambda_\sigma(m)$ and $y=\sigma(x)$.
    Under the threshold function $h(m,x)=|\Delta(m,x)|$, 
    if the action of $x$ on $m$ leads to state $r$, then the action of $y$ on $p$ leads to 
    state $s$ with the same probability, and $r$ and $s$ are in the same macro-state.
\end{lemma}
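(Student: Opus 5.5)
The plan mirrors the proof of the preceding lemma for $h(m,x)=|m|$. The only genuinely new ingredient is that the threshold test $h(m,x)>\theta$ takes the same truth value for $(m,x)$ and $(p,y)$. So the first step is the key identity
\[
\Delta(p,y)=\Gamma(\sigma(x))\cap\operatorname{supp}(\Lambda_\sigma(m))=\sigma\bigl(\Gamma(x)\cap\operatorname{supp}(m)\bigr)=\sigma\bigl(\Delta(m,x)\bigr).
\]
It follows from two facts. First, $\sigma$ is a graph automorphism, so $\Gamma(\sigma(x))=\sigma(\Gamma(x))$; note that $\sigma$ is a bijection on a finite vertex set and $\sigma^{-1}\in\mathcal{A}$, so edge preservation in one direction gives an equivalence. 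Second, $p[\sigma(u)]=m[u]$ for every $u$, so $\operatorname{supp}(p)=\sigma(\operatorname{supp}(m))$. Since $\sigma$ is injective, we get $|\Delta(p,y)|=|\Delta(m,x)|$, that is, $h(p,y)=h(m,x)$. This is the step I regard as the crux. It was already used implicitly in Case 2 of Lemma \ref{lemma:arbitrary}, so I would simply make it explicit here.

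With this identity I would split into the same three cases as before.

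\textbf{Case 1: $\Delta(m,x)=\emptyset$.} Then $\Delta(p,y)=\emptyset$, so $r=m+e_x$ and $s=p+e_y$, each occurring with probability $1$. Moreover $\Lambda_\sigma(r)=\Lambda_\sigma(m)+e_{\sigma(x)}=s$, so $r$ and $s$ lie in the same macro-state.

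\textbf{Case 2: $\Delta(m,x)\neq\emptyset$ and $|\Delta(m,x)|>\theta$.} By the identity, $|\Delta(p,y)|>\theta$ as well, so both arrivals trigger a match. The partner is chosen uniformly, hence $r=m-e_u$ has probability $1/|\Delta(m,x)|$. Its counterpart $s=p-e_{\sigma(u)}$ is admissible because $\sigma(u)\in\Delta(p,y)$, and it has probability $1/|\Delta(p,y)|$, which is the same value. Finally, $\Lambda_\sigma(r)=s$.

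\textbf{Case 3: $\Delta(m,x)\neq\emptyset$ and $|\Delta(m,x)|\le\theta$.} Both items are added, giving $r=m+e_x$ and $s=p+e_y=\Lambda_\sigma(r)$, each with probability $1$.

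In every case the discipline satisfies \eqref{eq:arbitrary_mu}. If in addition \eqref{eq:arbitrary_alpha} holds, then \eqref{eq:remark} follows, which is what Theorem \ref{theo:arbitrary} requires for lumpability.

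The main obstacle is only the careful justification that $\Delta$ commutes with $\sigma$ as sets, not merely in cardinality. This set-level statement is needed in Case 2 so that the successor states correspond bijectively, with $u\mapsto\sigma(u)$. The rest is routine and parallels the previous lemma.
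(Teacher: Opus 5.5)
Your proof is correct and follows the paper's argument. It uses the same three-case split, and its only new ingredient is that $|\Delta(p,y)|=|\Delta(m,x)|$, so the threshold test gives the same outcome for $(m,x)$ and $(p,y)$. You also make explicit the set identity $\Delta(p,y)=\sigma(\Delta(m,x))$, which the paper only asserts in cardinality form; this identity is what actually justifies the correspondence $u\mapsto\sigma(u)$ between successor states in Case 2.
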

\begin{proof}
The evolution of  state $m$ upon the arrival of item $x$ in an arbitrary graph $G$ can be divided into three cases:
    \begin{itemize}
    \item Case 1:$\Delta(m,x)=\emptyset$. The proof is exactly the same as the proof of the previous lemma.
    \item Case 2:$\Delta(m,x)\neq\emptyset$ and $|\Delta(m,x)|>\theta$.
    The difference with the previous proof is the following conditional statement:
    $$|\Delta(p,y)|=|\Delta(\Lambda_\sigma(m),\sigma(x))|$$
    Since automorphisms preserve the structure of the graph and the number of items in each state, we have:
    $$|\Delta(p,y)|=|\Delta(m,x)|>\theta.$$
    \item Case 3: $\Delta(m,x)\neq\emptyset$ and $|\Delta(m,x)|\leq \theta$. 
    Similarly, the difference lies in the conditional statement:
    $$|\Delta(p,y)|=|\Delta(m,x)|\leq \theta.$$
\end{itemize}
\end{proof}

\begin{lemma}
    Let $m\in\mathcal{S}$ be a state, and let $x\in V$ be an arriving item.
    Consider an arbitrary permutation $\sigma$ in subgroup $\mathcal{A}$, defining $p=\Lambda_\sigma(m)$ and $y=\sigma(x)$.
    Under the threshold function $h(m,x)=\max_{u\in\Gamma(x)}m[u]$, 
    if the action of $x$ on $m$ leads to state $r$, then the action of $y$ on $p$ leads to state $s$ with the same probability, and $r$ and $s$ are in the same macro-state.
\end{lemma}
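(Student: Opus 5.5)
We follow the same three-case scheme as in the two previous lemmas. The only new ingredient is an invariance identity for the threshold function:
\[
h(\Lambda_\sigma(m),\sigma(x))=\max_{v\in\Gamma(\sigma(x))}p[v]=\max_{u\in\Gamma(x)}m[u]=h(m,x).
\]
To establish it, we use that $\sigma$ is an automorphism, so it restricts to a bijection from $\Gamma(x)$ onto $\Gamma(\sigma(x))$. For $v=\sigma(u)$ we have $p[v]=m[\sigma^{-1}(v)]=m[u]$. Hence the multisets $\{m[u]:u\in\Gamma(x)\}$ and $\{p[v]:v\in\Gamma(y)\}$ coincide, and so do their maxima. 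This step is the only real content. The point to check carefully is that $\sigma$ maps $\Gamma(x)$ \emph{onto} $\Gamma(y)$, and not merely into it. This holds because $\sigma$ is a permutation of the finite set $V$ preserving edges, so it preserves the (finite) edge set bijectively, and $\sigma^{-1}\in\mathcal{A}$ is also an automorphism.

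With this identity, the cases are treated as follows.

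\textbf{Case 1}, $\Delta(m,x)=\emptyset$. This is identical to the previous lemmas. By Lemma \ref{lemma:arbitrary}, $\Delta(p,y)=\emptyset$, and $s=p+e_y=\Lambda_\sigma(m+e_x)=\Lambda_\sigma(r)$ with probability $1$. Note that here $h(m,x)=0$, so this case is consistent with the threshold rule regardless of $\theta$.

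\textbf{Case 2}, $\Delta(m,x)\neq\emptyset$ and $h(m,x)>\theta$. By the identity, $h(p,y)>\theta$, and as in Lemma \ref{lemma:arbitrary}, $\sigma(\Delta(m,x))=\Delta(p,y)$, so both sets have the same cardinality $q$. The uniform choice then gives
\[
\mu_{R(m,x)}(m-e_u)=1/q=\mu_{R(p,y)}(p-e_{\sigma(u)}),
\]
and $\Lambda_\sigma(m-e_u)=p-e_{\sigma(u)}$.

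\textbf{Case 3}, $\Delta(m,x)\neq\emptyset$ and $h(m,x)\le\theta$. Then $h(p,y)\le\theta$, both items are added, and $s=p+e_y=\Lambda_\sigma(r)$ with probability $1$.

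In each case $s=\Lambda_\sigma(r)$, so $r$ and $s$ lie in the same macro-state, and the transition probabilities agree. Combined with $\alpha_x=\alpha_y$ from \eqref{eq:arbitrary_alpha}, this yields \eqref{eq:remark}. Strong aggregation then follows as in Theorem \ref{theo:arbitrary}.
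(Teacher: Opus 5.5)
Your proof is correct and follows the paper's argument: the same three-case split, with the key step being the invariance $\max_{v\in\Gamma(\sigma(x))}p[v]=\max_{u\in\Gamma(x)}m[u]$, which the paper obtains by reindexing $v=\sigma(u)$. Your explicit check that $\sigma$ maps $\Gamma(x)$ \emph{onto} $\Gamma(\sigma(x))$ just makes that reindexing rigorous, and you also write the Case 3 outcome correctly as $p+e_y$ and use $\theta$ where the paper has a stray $S$.
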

\begin{proof}
The evolution of the state $m$ upon the arrival of item $x$ in an arbitrary graph $G$ can be divided into three cases:
    \begin{itemize}
    \item Case 1:$\Delta(m,x)=\emptyset$. The proof is exactly the same as the proof of the
previous lemma.
    \item Case 2:$\Delta(m,x)\neq\emptyset$ and $\max_{u\in\Gamma(x)}m[u]>S$.
    In this scenario, the arrival of item $y$ also satisfies the condition:
    $$\max _{v \in \Gamma(y)} p[v]=\max _{\sigma(u) \in \Gamma(\sigma(x))} p[\sigma(u)]=\max _{u \in \Gamma(x)} m[u]>S .$$
    \item Case 3: $\Gamma(x)\cap m\neq\emptyset$ and $\max_{u\in\Gamma(x)}m[u]\leq S$. 
    Same as in Case 2, we satisfy the following condition for item $y$:
    $$\max _{v \in \Gamma(y)} p[v]=\max _{\sigma(u) \in \Gamma(\sigma(x))} p[\sigma(u)]=\max _{u \in \Gamma(x)} m[u]\leq S .$$
\end{itemize}
\end{proof}

Based on the proofs of the above three lemmas, we can conclude that the lumpability condition holds in arbitrary graphs for all three threshold functions in Remark \ref{remark:threshold function} when $\alpha_i=\alpha$ for all $i\in\mathcal{V}$. 
From this, we can give the properties that need to be satisfied by the threshold function $h(m,x)$ in the following theorem in order for it to ensure that the associated Markov chain can satisfy the lumpability conditions.

\begin{theo}
    Let $G=(V,E)$ be a compatibility graph with a non-trivial automorphism group $\mathrm{Aut}[G]$.
    Consider a subgroup $\mathcal{A}\subseteq \mathrm{Aut}[G]$ and a threshold-driven non-greedy matching discipline $T$ defined by a threshold function $h(m,x)$.
    The associated Markov chain $\mathcal{M}(G,(\alpha_i)_{i=1,\cdots,n},T)$ is strongly aggregable if the threshold function $h(m,x)$ satisfies the following properties:
    \begin{enumerate}
        \item Symmetry Invariance:
        For any $\sigma\in\mathcal{A}$, let $p=\Lambda_\sigma(m)$ and $y=\sigma(x)$. Then,
        $$h(m,x)=h(p,y),\quad \forall m\in\mathcal{S},x\in V.$$
        
        \item Transition Consistency:
        For all $\sigma\in\mathcal{A}$, the transition probabilities are identical:
        $$\alpha_x\mu_{R(m,x)}(r_i)=\alpha_y\mu_{R(p,y)}(s_i),$$
        where $s_i=\Lambda_\sigma(r_i)$.

    \end{enumerate}
\end{theo}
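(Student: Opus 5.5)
The plan is to reduce the theorem to Theorem \ref{theo:arbitrary}. As announced at the start of Section~5, it suffices to check the structural part of Lemma \ref{lemma:arbitrary} and the two conditions of Corollary \ref{corollary:arbitrary} for the discipline $T$. The greedy assumption enters Theorem \ref{theo:arbitrary} only through Lemma \ref{lemma:arbitrary}. So the first step is to rebuild that lemma for $T$. The state space must now be all of $\mathbb{N}^n$ (or the set reachable under $T$), not states indexed by independent sets. I will check that $\Lambda_\sigma$ maps this space onto itself: $\Lambda_\sigma$ is a bijection of $\mathbb{N}^n$, and by property 1 reachability is preserved, since every transition $m\to r$ under $x$ is mirrored by $\Lambda_\sigma(m)\to\Lambda_\sigma(r)$ under $\sigma(x)$.

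Fix $m$, $x$, $\sigma\in\mathcal{A}$, $p=\Lambda_\sigma(m)$ and $y=\sigma(x)$. I will split into the same three cases as in the three threshold lemmas above.

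\textbf{Case 1:} $\Delta(m,x)=\emptyset$. The automorphism argument of Lemma \ref{lemma:arbitrary} gives $\Delta(p,y)=\sigma(\Delta(m,x))=\emptyset$. Hence $s=p+e_y=\Lambda_\sigma(m+e_x)=\Lambda_\sigma(r)$.

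\textbf{Case 2:} $\Delta(m,x)\neq\emptyset$ and $h(m,x)>\theta$. By Symmetry Invariance, $h(p,y)=h(m,x)>\theta$. The identity $\Delta(p,y)=\sigma(\Delta(m,x))$ again gives a bijection $z_i\mapsto\sigma(z_i)$ between the possible matched types. By additivity of $\Lambda_\sigma$, $s_i=p-e_{\sigma(z_i)}=\Lambda_\sigma(r_i)$.

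\textbf{Case 3:} $h(m,x)\le\theta$. Symmetry Invariance gives $h(p,y)\le\theta$, and $s=p+e_y=\Lambda_\sigma(m+e_x)$.

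In every case the successor sets correspond under $\Lambda_\sigma$. Symmetry Invariance is exactly what guarantees that $m$ and $p$ fall in the same case, so the branch decision is never split across a macro-state.

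The second step uses the Transition Consistency hypothesis, which is literally \eqref{eq:remark}. Combined with the bijection $r_i\mapsto s_i=\Lambda_\sigma(r_i)$, it gives for every macro-state $\mathcal{B}_j$
$$\sum_{u\in\mathcal{B}_j}\alpha_x\Pr(x:m\to u)=\sum_{u\in\mathcal{B}_j}\alpha_{\sigma(x)}\Pr(\sigma(x):p\to u),$$
because $\Lambda_\sigma$ maps $\mathcal{B}_j$ onto itself. This requires $\mathcal{A}$ to be a group, so that $\Lambda_\sigma\Lambda_\tau=\Lambda_{\sigma\tau}$ and the sets $\mathcal{B}_m$ are orbits. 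Summing over $x\in V$ and reindexing by the bijection $x\mapsto\sigma(x)$ gives $\Pr(m,\mathcal{B}_j)=\Pr(p,\mathcal{B}_j)$. This is the ordinary lumpability of Definition \ref{lump}, and I will conclude exactly as in the proof of Theorem \ref{theo:arbitrary}. The no-arrival loop, with probability $\alpha_\emptyset$, maps $m$ to itself and trivially preserves macro-states.

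The main obstacle is the case of a matched item whose type also has items present on the arrival's side. When compatible items coexist, $m$ may contain both $x$ and elements of $\Gamma(x)$. I need to make sure the case analysis of Lemma \ref{lemma:arbitrary} did not silently use the independent-set structure of states, for example in the claim $|\Delta(p,y)|=|\Delta(m,x)|$. I would handle this by proving $\Delta(\Lambda_\sigma(m),\sigma(x))=\sigma(\Delta(m,x))$ directly from $p[\sigma(u)]=m[u]$ and the automorphism property, without any independence assumption. I would also note that Transition Consistency already covers the add-despite-match transition, where $\mu=1$ on both sides, so Case 3 needs only $\alpha_x=\alpha_y$. That equality is implied by the hypothesis applied to that branch.
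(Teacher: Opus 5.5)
Your proposal is correct and follows essentially the same route as the paper's proof. It uses the same three-case split. Symmetry Invariance forces $m$ and $\Lambda_\sigma(m)$ into the same branch of the threshold rule, Transition Consistency gives equal probabilities to the corresponding successors, and the summation argument of Theorem~\ref{theo:arbitrary} finishes the proof. Your version is more careful than the paper's sketch in two places: you prove $\Delta(\Lambda_\sigma(m),\sigma(x))=\sigma(\Delta(m,x))$ without any independent-set assumption, and you check that the blocks are orbits of a group action. One small correction: the remark that reachability is preserved with positive probability also needs Transition Consistency, not Symmetry Invariance alone.
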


\begin{proof}
The first property ensures that states within the same partitions behave consistently with respect to $h(m,x)$, leading to identical evaluations of the threshold condition.
The second property corresponds to the automorphism-based transition consistency described in Corollary \ref{corollary:arbitrary}.
Specifically, for the three cases induced by the arrival of an item $x$:
\begin{itemize}
    \item In Case 1 and Case 3, where no matching occurs, the transition probabilities satisfy:
    $$\mu_{R(m,x)}(m+e_x)=\mu_{R(p,y)}(p+e_y)=1.$$
    Here, the consistency of transition probabilities reduces to the consistency of arrival probabilities:
    $$\alpha_x=\alpha_y.$$
    \item In Case 2, where matching occurs, the selection of a compatible item in $\Gamma(x)\cap m$ is according to the uniform probability distribution.
    This ensures that for any $s_i=\Lambda_\sigma(r_i)$, the transition probabilities satisfy:
    $$\mu_{R(m,x)}(r_i)=\mu_{R(p,y)}(s_i).$$
\end{itemize}
\end{proof}

To summarize, we extend the concept of strong aggregation beyond traditional greedy matching disciplines by introducing non-greedy Rejection matching disciplines and Threshold-driven matching disciplines.
Our analysis demonstrates that lumpability conditions are preserved under these non-greedy disciplines, provided that the rejection function and threshold function satisfy certain properties.
These findings establish a theoretical foundation for studying more general stochastic matching models.

\section {Conclusion} 

In the sequel of this paper, we plan to check under which conditions the Markov chains are also exactly lumpable for the partitions we found here.  
Another idea is to study the chains associated with the multigraph compatibility graph. 
It is possible that the arguments based on the automorphism may also work for this topology and lead to efficient numerical algorithms. 
Clearly, it could be the case when all nodes have a self-loop because the Markov chain is finite. 
It may also be possible to find some structural description of the chains based on matrices (like in a QBD structure \cite{LaRa99}) or tensors (see for instance \cite{FPSt98,FPSt08}).

\bibliographystyle{unsrt} 
\bibliography{matching,Borne} 
\end{document}